\newtheorem{thm}{Theorem}
\newtheorem{lem}{Lemma}
\newtheorem{rem}{Remark}
\newcommand{\zfour}{\mathbb{Z}_4}
\newcommand{\ztwo}{\mathbb{Z}_2}
\newcommand{\f}{\mathbb{F}_{2^m}}
\newcommand{\Tr}{{\rm {Tr}}}
\newcommand{\tr}{{\rm {tr}}}
\newcommand{\rr}{\mathbb{R}}
\newcommand{\cc}{\mathcal{C}}
\newcommand{\tei}{\mathbb{L}}
\newcommand{\bc}{\boldsymbol{c}}
\newcommand{\re}{\mathfrak{Re}}
\newcommand{\im}{\mathfrak{Im}}
\begin{document}
	
	\title{The Lee weight distributions of several classes of linear codes over $\mathbb{Z}_4$}
	
	\author{
		Zhexin Wang\thanks{Z. Wang and X. Zeng are with the Key Laboratory of Intelligent Sensing System and Security (Hubei University), Ministry of Education, the Hubei Key Laboratory of Applied Mathematics, Faculty of Mathematics and Statistics, Hubei University, Wuhan 430062, China. Email: zhexin.wang@aliyun.com, xiangyongzeng@aliyun.com},
		Nian Li\thanks{N. Li is with the Key Laboratory of Intelligent Sensing System and Security (Hubei University), Ministry of Education, the Hubei Provincial Engineering Research Center of Intelligent Connected Vehicle Network Security, School of Cyber Science and Technology, Hubei University, Wuhan 430062, China, and also with the State Key Laboratory of Integrated Services Networks, Xi'an 710071, China. Email: nian.li@hubu.edu.cn},
		Xiangyong Zeng,
		Xiaohu Tang \thanks{X. Tang is with the Information Coding and Transmission Key Lab of Sichuan Province, CSNMT Int. Coop. Res. Centre (MoST), Southwest Jiaotong University, Chengdu 610031, China. Email: xhutang@swjtu.edu.cn}
	}
	
	%\date{}
	\maketitle
	
\begin{abstract}
Let $\mathbb{Z}_4$ denote the ring of integers modulo $4$. The Galois ring GR$(4,m)$, which consists of $4^m$ elements, represents the Galois extension of degree $m$ over $\mathbb{Z}_4$. The constructions of codes over $\mathbb{Z}_4$ have garnered significant interest in recent years. In this paper, building upon previous research, we utilize the defining-set approach to construct several classes of linear codes over $\mathbb{Z}_4$ by effectively using the properties of the trace function from GR$(4,m)$ to $\mathbb{Z}_4$. As a result, we have been able to obtain new linear codes over $\mathbb{Z}_4$ with good parameters and determine their Lee weight distributions. Upon comparison with the existing database of $\mathbb{Z}_4$ codes, our construction can yield novel linear codes, as well as linear codes that possess the best known minimum Lee distance.
\end{abstract}
		
\begin{IEEEkeywords}
Galois ring, $\mathbb{Z}_4$-linear code, Lee weight distribution, Teichm\"{u}ller set, Exponential sum.
\end{IEEEkeywords}

\section{Introduction}\label{sec-intro}
	
The exploration of codes over $\zfour$ dates back to the early 1970s. In a groundbreaking paper \cite{Hammons1994}, Hammons et al. constructed a variety of $\zfour$-linear codes, including the Kerdock, Preparata, Goethals, Delsarte-Goethals, and Goethals-Delsarte codes. These constructions highlighted the $\zfour$-linearity of these codes, which offered enhanced error correction capabilities and greater coding efficiency. This seminal work underscored the significance of $\zfour$-linear codes in the realm of cryptography and piqued the interest of researchers in the field. Subsequently, in 1997, Wan delved into the intricate connections between lattices, designs, low correlation sequences, and codes over $\zfour$ in his influential book \cite{Wan1997}, which has significantly deepened and broadened our understanding of $\zfour$ codes and their pivotal role in coding theory. It has also inspired scholars to delve into the study of the constructions and properties of $\zfour$ codes.

Over the past three decades, the construction of codes over $\zfour$ has been a vibrant area of research, with a variety of mathematical tools being employed in these studies. Functions on the Galois ring have been particularly influential in this domain. Schmidt \cite{KUSchmidt2009} constructed several families of constant-amplitude codes over $\zfour$ and explored their connections with bent functions over $\zfour$. Following this, Li et al. \cite{Li2011} investigated an infinite class of codes over $\zfour$ by defining codewords based on $\zfour$-valued quadratic forms, and they determined the Lee and Hamming weight distributions of these codes by analyzing the distribution of values of a class of exponential sums over $\zfour$. More recently, Ban et al. \cite{Ban2022} constructed a family of codes over $\zfour$ derived from generalized bent functions. Using two classes of functions over $\zfour$, Wu et al. \cite{WY.LB2024} constructed an infinite class of $\zfour$-linear codes and an infinite class of nonlinear codes over $\zfour$, and completely determined their Lee weight distributions. Additionally, simplicial complexes and posets have been recognized as effective tools for constructing codes over finite rings. Zhu et al. \cite{Zhu2020} constructed three classes of linear codes over $\zfour$ using posets of the disjoint union of two chains and determined their Lee weight distributions. Wu et al. \cite{Wu2024} constructed some optimal $\zfour$-linear codes via simplicial complexes and determined their Lee weight distributions, which include two infinite classes of 4-weight codes. For further exploration of $\zfour$ code constructions, readers are directed to \cite{Helleseth1996, Pujol2009, Liu2019, Liu2018, Shi2019, Shi2017, Cao2022, Lin2023, Meng2024, Hyun2024}. Recent studies have continued to explore new constructions of codes over $\zfour$, employing various mathematical tools and techniques. However, the literature indicates that there are still only a few known results on the construction of $\zfour$ codes, especially for $\zfour$-linear codes.

Inspired by the works of \cite{Li2011,Shi2019,WY.LB2024}, this paper concentrates on the construction of several classes of linear codes over $\zfour$ utilizing the properties of the trace functions on the Galois ring $\rr$=GR$(4,m)$ through the defining-set approach. Let $\tei = \{ z\in\rr: z^{2^m} = z \}$ be the Teichm\"{u}ller set of $\rr$ and then every element $z\in\rr$ can be uniquely expressed as $z=x+2y$ for $x,y\in\tei$. The trace function from $\rr$ to $\zfour$ is defined as $\Tr_1^m(z)=\sum_{j=0}^{m-1} (x^{2^j} + 2y^{2^j})$ for any $z=x+2y\in\rr$ and $x, y\in\tei$. Let $D=\{d_1,d_2.\cdots,d_n\}\subset\rr$ and $\cc_D$ be defined as follows:
\begin{equation}\label{eq-code}
	\cc_D = \Big\{ \big(\Tr_1^m(ud_1), \Tr_1^m(ud_2), \cdots, \Tr_1^m(ud_n)\big): u \in \rr \Big\}.
\end{equation}
It can be readily verified that $\cc_D$ is a linear code of length $n$ over $\zfour$. In this paper, for $t=0,1,2,3$, define
\begin{equation}\label{eq-Dt}
	D_{t}= \{x\in\tei: \Tr_1^m(x) = t\}.
\end{equation}
We delve into the study of linear codes over $\zfour$ as defined by \eqref{eq-code}, utilizing defining sets $D=D_{t_1}$, $D=D_{t_1}\cup D_{t_2}$ and $D=D_{t_1}\cup D_{t_2}\cup D_{t_3}$, with $t_1,t_2,t_3\in\zfour$. To achieve this, we introduce two exponential sums parameterized by $u\in\rr$, which are intrinsically linked to the trace function mapping from $\rr$ to $\zfour$. Through a meticulous analysis of the value distributions of these two exponential sums and their associated exponential sums, we are able to deduce the parameters and Lee weight distributions of the $\zfour$-linear codes derived from our construction when $m$ is odd. Our methodical approach can yield a variety of $\zfour$-linear codes with few weights. It is significant to highlight that, as per experimental data, our codes not only include some with the best known minimum Lee distance but also introduce novel codes over $\zfour$ when contrasted with the existing database of $\zfour$ codes \cite{Codetable}. Regarding the scenario where $m$ is even, our current findings only allow us to speculate on the possible weights of the aforementioned codes, leaving the weight distribution as an unresolved challenge.

The structure of this paper is outlined as follows. In Section \ref{codes}, we present our main theorems and showcase some novel codes and codes with the best known minimum Lee distance established by our theorems. In Section \ref{sec-auxiliary}, we elaborate on fundamental concepts concerning the ring $\rr$ and its Teichm\"{u}ller set, followed by an exploration of two exponential sums related to the trace function and other supplementary results that will serve as the foundation for proving our theorems. The subsequent three sections are dedicated to the proofs of our three main theorems. Finally, in Section \ref{sec-conclusion}, we summarize the findings of this paper.

\section{Main results}\label{codes}
	
The Lee metric is a crucial measure in the analysis of codes defined over the ring $\zfour$. The Lee weight of an element $a\in\zfour$ is defined as $\omega_L(a)=1-\mathfrak{Re}(i^a)$, where $i = \sqrt{-1}$ and $\mathfrak{Re}(x)$ represents the real part of the complex number $x$. Then the Lee weight of a codeword $\bc = (c_1, c_2, \cdots, c_n)\in\zfour^n$ is defined as $\omega_L(\bc)=n-\mathfrak{Re}(\sum_{j=1}^{n}i^{c_j})$ and the Lee distance of $x,y\in\zfour^n$ is defined as $\omega_L(x-y)$. Codes that differ solely by a permutation of their coordinate positions are referred to as permutation-equivalent. It is shown that any linear code $\mathcal{C}$ over $\zfour$ is permutation-equivalent to a code that possesses a generator matrix of the following form:
\begin{eqnarray}\label{eq-k1k2}
 \left(
	\begin{array}{ccc}
		I_{k_1} & A_1 & B_1+2B_2 \\
		0 & 2I_{k_2} & 2A_2 \\
	\end{array}
	\right),
\end{eqnarray}
where $A_1,A_2,B_1,B_2$ are matrices with entries $0$ or $1$ and $I_k$ represents the identity matrix of size $k$.
If a linear code $\mathcal{C}$ of length $n$ has type $4^{k_1}2^{k_2}$ and minimum Lee distance $d_L$, then it is referred to as an $[n, k_1, k_2, d_L]$ code.
	
Throughout this paper, let $m$ be odd and define	
\begin{eqnarray}\label{eq-tau-sig}	
\sigma=
	\left\{ \begin{array}{cl}
		0, &  {\rm if\;}  3\nmid m, \\
		-3, &  {\rm if\;}  3\mid m,
	\end{array}\right.	
	\;\;
\tau=
	\left\{ \begin{array}{cl}
		i^\frac{m-1}{2}, &   {\rm if}\ m \equiv 1 \, {\rm mod}\, 4, \\
		i^\frac{m+1}{2}, &   {\rm if}\ m \equiv 3 \, {\rm mod}\, 4.
	\end{array}\right.	
\end{eqnarray}

\begin{thm}\label{thm1}
Let $m>3$ be odd, $\cc_D$ and $D_t$ defined as \eqref{eq-code} and \eqref{eq-Dt} respectively, and $D=D_t$. Then for $t=0,1,2,3$, $\cc_D$ has parameters as follows:
%		\begin{center} \label{tab-par1}
%			\begin{tabular}{|c| c| c| c| c|}
%				\hline $t$  &  $n$ & $k_1$ & $k_2$ & $d_L$ \\ \hline
%				$0$  &  $2^{m-2}+2^\frac{m-3}{2}\tau$ & $m-1$ & $0$ & $2^{m-2}+2^\frac{m-3}{2}(\tau-2)$  \\ \hline				
%				$1$  &  $2^{m-2}+2^\frac{m-3}{2}i^{m-1}\tau$ & $m$ & $0$ & $2^{m-2}+2^\frac{m-3}{2}(i^{m-1}\tau-2)$ \\ \hline				
%				$2$  &  $2^{m-2}-2^\frac{m-3}{2}\tau$ & $m-1$ & $1$ & $2^{m-2}-2^\frac{m-3}{2}(\tau+2)$ \\ \hline				
%				$3$  &  $2^{m-2}-2^\frac{m-3}{2}i^{m-1}\tau$ & $m$ & $0$ & $2^{m-2}-2^\frac{m-3}{2}(i^{m-1}\tau+2)$ \\ \hline
%			\end{tabular}
%		\end{center}
\begin{center} %\label{tab-par1}
	\begin{tabular}{|c| c| c| c|}
		\hline $t$  &  $n$ & {\rm Number of codewords} & $d_L$ \\ \hline
		$0$  &  $2^{m-2}+2^\frac{m-3}{2}\tau$ & $4^{m-1}$ & $2^{m-2}+2^\frac{m-3}{2}(\tau-2)$  \\ \hline				
		$1$  &  $2^{m-2}+2^\frac{m-3}{2}i^{m-1}\tau$ & $4^m$ & $2^{m-2}+2^\frac{m-3}{2}(i^{m-1}\tau-2)$ \\ \hline				
		$2$  &  $2^{m-2}-2^\frac{m-3}{2}\tau$ & $2^{2m-1}$ & $2^{m-2}-2^\frac{m-3}{2}(\tau+2)$ \\ \hline				
		$3$  &  $2^{m-2}-2^\frac{m-3}{2}i^{m-1}\tau$ & $4^m$ &  $2^{m-2}-2^\frac{m-3}{2}(i^{m-1}\tau+2)$ \\ \hline
	\end{tabular}
\end{center}
Moreover, the Lee weight distribution of $\cc_D$ for $t=0$ is given by
\begin{center} %\label{tab1-0}
	\begin{tabular}{l l}
		\hline {\rm Lee Weight}  &  {\rm Frequency} \\ \hline
		$0$  &  $1$  \\ \hline
		$2^{m-2}$  &  $2^{m-2}+2^\frac{m-3}{2}\tau-1$ \\ \hline	
		$2^{m-2}+2^{\frac{m-1}{2}}\tau$  &  $2^{m-2}-2^\frac{m-3}{2}\tau$ \\ \hline				
        $2^{m-2}+2^{\frac{m-3}{2}}\tau$  &  $2^{m-4}(3\cdot2^{m-1}-5+3\sigma)-2^\frac{m-3}{2}(2^{m-2}-1)\tau$ \\ \hline
		$2^{m-2}+2^{\frac{m-3}{2}}(\tau\pm1)$  &  $2^{m-3}(2^{m-1}\mp2^\frac{m+1}{2}-1-\sigma\pm2\tau)$ \\ \hline
		$2^{m-2}+2^{\frac{m-3}{2}}(\tau\pm2)$  &  $2^{m-5}(2^{m-1}+1+\sigma\mp4\tau)+2^\frac{m-5}{2}(2^{m-2}-1)(\tau\mp2)$ \\ \hline
	\end{tabular}
\end{center}
the Lee weight distribution of $\cc_D$ for $t=2$ is given by
\begin{center} %\label{tab1-2}
	\begin{tabular}{l l}
		\hline {\rm Lee Weight}  &  {\rm Frequency} \\ \hline
		$0$  &  $1$  \\ \hline
		$2^{m-2}$  &  $2^{m-1}-1$ \\ \hline				
		$2^{m-1} - 2^{\frac{m-1}{2}}\tau$  &  $1$ \\ \hline			
		$2^{m-2} - 2^{\frac{m-1}{2}}\tau$  &  $2^{m-1}-1$ \\ \hline					
		$2^{m-2} - 2^{\frac{m-3}{2}}\tau$  &  $2^{m-3}(3\cdot2^{m-1}-1+3\sigma)+2^\frac{m-1}{2}(2^{m-2}-1)\tau$ \\ \hline			
		$2^{m-2} - 2^{\frac{m-3}{2}}(\tau\pm1)$  &  $2^{m-2}(2^{m-1}-1-\sigma) \ {\rm (each)}$ \\ \hline			
        $2^{m-2} - 2^{\frac{m-3}{2}}(\tau\pm2)$  & $2^{m-4}(2^{m-1}-3+\sigma)-2^\frac{m-7}{2}(2^m-4)\tau \ {\rm (each)}$ \\ \hline
	\end{tabular}
\end{center}
and the Lee weight distributions of $\cc_D$ for $t=1,3$ are given by
\begin{center} %\label{tab1-odd}
	\begin{tabular}{l l}
		\hline {\rm Lee Weight}  &  {\rm Frequency} \\ \hline
		$0$  &  $1$  \\ \hline			
		$2^{m-1} - 2^{\frac{m-1}{2}}i^{m+t}\tau$  &  $1$ \\ \hline		
		$2^{m-2}$  &  $2^{m-1}-1$ \\ \hline
		$2^{m-2} - 2^{\frac{m-1}{2}}i^{m+t}\tau$  &  $2^{m-1}-1$ \\ \hline		
		$2^{m-2} - 2^{\frac{m-3}{2}}i^{m+t}\tau$  &  $(2^m-2)\big( 3\cdot2^{m-3} + 2^\frac{m-3}{2}i^{m+t}\tau \big)+2^m $ \\ \hline		
		$2^{m-2} + 2^{\frac{m-3}{2}}i^{t-1}(i^{m-1}\tau\pm1)$  &  $2^{m-2}(2^m-2)\ {\rm (each)}$ \\ \hline			
		$2^{m-2} + 2^{\frac{m-3}{2}}i^{t-1}(i^{m-1}\tau\pm2)$  &  $(2^m-2)\big( 2^{m-4} - 2^{\frac{m-5}{2}}i^{m+t}\tau \big)\ {\rm (each)}$ \\ \hline
	\end{tabular}
\end{center}
\end{thm}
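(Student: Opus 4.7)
The plan is to translate the Lee weight of each codeword into an exponential sum and invoke the value distributions developed in Section~\ref{sec-auxiliary}. For $u \in \rr$, writing $c_u$ for the corresponding codeword, the definition of Lee weight gives
\[
\omega_L(c_u) \;=\; n - \re\bigl(S_t(u)\bigr), \qquad S_t(u) := \sum_{d \in D_t} i^{\Tr_1^m(ud)}.
\]
Applying $\zfour$-character orthogonality to the indicator of $\{x \in \tei : \Tr_1^m(x) = t\}$ and then using the $\zfour$-linearity of $\Tr_1^m$ reduces this to
\[
S_t(u) \;=\; \frac{1}{4} \sum_{j \in \zfour} i^{-jt}\, T(u+j), \qquad T(v) := \sum_{x \in \tei} i^{\Tr_1^m(vx)},
\]
so the entire analysis funnels through the Teichm\"uller exponential sum $T$ and its three $\zfour$-translates. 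In particular, setting $u = 0$ and using $T(0) = 2^m$ together with the values of $T(1), T(2), T(3)$ established in Section~\ref{sec-auxiliary} recovers the length $n = |D_t|$ in the first column of the parameter table.

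For the weight distribution I would write $u = a + 2b$ with $a, b \in \tei$ and, exploiting the identity
\[
T(u+2) \;=\; \sum_{x \in \tei} (-1)^{\Tr_1^m(x)}\, i^{\Tr_1^m(ux)},
\]
split the four shifted sums according to the parity of $\Tr_1^m(x)$ on $\tei$. I would then stratify $u$ by its residue in $\rr/2\rr$, further distinguishing $u \in \zfour$, $u \in 2\rr$, and generic $u$, and invoke the value distributions of the two relevant exponential sums from Section~\ref{sec-auxiliary} on each stratum. This yields the joint tuple $(T(u+j))_{j \in \zfour}$, hence $\re\bigl(S_t(u)\bigr)$, on each stratum; substituting into $\omega_L(c_u)$ and grouping $u$'s by their weight gives the Lee-weight multiplicities indexed by $u \in \rr$. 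Since the evaluation map $u \mapsto c_u$ is a $\zfour$-module homomorphism with kernel $K_t = \{u \in \rr : c_u = 0\}$, the codeword frequencies are obtained by dividing these $u$-multiplicities by $|K_t|$, which can be read off as the multiplicity of weight $0$; a direct check on $u \in \zfour$ shows $|K_0| = 4$, $|K_2| = 2$, $|K_1| = |K_3| = 1$, producing the codeword counts $4^{m-1}, 4^m, 2^{2m-1}, 4^m$ of the third column. The constants $\sigma$ and $\tau$ reflect values of quadratic Gauss sums over $\mathbb{F}_{2^m}$ depending on $m \bmod 3$ and $m \bmod 4$, and the minimum Lee distance $d_L$ is then read off as the smallest positive weight in each table.

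The principal obstacle is that the four sums $T(u+j)$ are \emph{not} independent as $u$ varies over $\rr$, so knowledge of the marginal distribution of $T$ alone does not suffice: one must track the \emph{joint} values $(T(u+j))_{j \in \zfour}$ on each stratum. Managing this requires a careful case split by $m \bmod 4$ and by whether $3 \mid m$ (to separate the contributions of $\tau$ and $\sigma$), together with a split according to the position of $u$ inside $\rr$ (which governs the low-multiplicity rows such as the weight-$1$ rows appearing in the $t = 1, 2, 3$ tables). A consistency check that the codeword multiplicities sum to the third-column totals, and that $S_t(0) = n$, safeguards against miscounts throughout the argument.
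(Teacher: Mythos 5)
Your proposal follows essentially the same route as the paper: the paper writes $\omega_L(\bc(u)) = |D_t| - (N_0-N_2)$, uses $\zfour$-character orthogonality to express $N_0-N_2$ as $\tfrac{1}{8}\sum_j$ of the shifted sums $S_\pm(u+j)$ (which are exactly $2\re T(u+j)$ and $2i\,\im T(u+j)$ in your notation), stratifies $u=a+2b$ into the same strata, invokes the joint distribution of the four shifted sums (Lemma \ref{lem-S1S2S3S4}, derived by the moment method from Lemmas \ref{lem-S1}--\ref{lem-S1234}), and finally divides the $u$-multiplicities by the kernel sizes $4,1,2,1$ to obtain the codeword counts. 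Your identification of the joint (rather than marginal) distribution of $(T(u+j))_{j\in\zfour}$ as the central obstacle, and of the kernel $K_t$ read off from the weight-$0$ multiplicity, matches the paper's argument exactly.
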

	
\begin{thm}\label{thm2}
Let $m>1$ be odd, $\cc_D$ and $D_t$ defined as \eqref{eq-code} and \eqref{eq-Dt} respectively, and $D=D_{t_1}\cup D_{t_2}$, where $t_1,t_2\in\zfour$. Then, when $t_1$ and $t_2$ have the same parity $\cc_D$ has parameters as follows:
%		\begin{center} \label{tab-par2}
%			\begin{tabular}{|c| c| c| c| c|}
%				\hline $(t_1,t_2)$  &  $n$ & $k_1$ & $k_2$ & $d_L$ \\ \hline
%				$(0,2)$  &  $2^{m-1}$ & $m-1$ & $1$ & $2^{m-1}-2^\frac{m-1}{2}$  \\ \hline				
%				$(1,3)$  &  $2^{m-1}$ & $m$ & $0$ & $2^{m-1}-2^\frac{m-1}{2}$ \\ \hline				
%			\end{tabular}
%		\end{center}

\begin{center} %\label{tab-par2}
	\begin{tabular}{|c| c| c| c|}
		\hline $(t_1,t_2)$  &  $n$ & {\rm Number of codewords} & $d_L$ \\ \hline
		$(0,2)$  &  $2^{m-1}$ & $2^{2m-1}$ & $2^{m-1}-2^\frac{m-1}{2}$  \\ \hline				
		$(1,3)$  &  $2^{m-1}$ & $4^m$ & $2^{m-1}-2^\frac{m-1}{2}$ \\ \hline				
	\end{tabular}
\end{center}
Moreover, the Lee weight distribution of $\cc_D$ for $(t_1,t_2)=(0,2)$ is given by
\begin{center} %\label{tab2-02}
	\begin{tabular}{l l}
		\hline {\rm Lee Weight}  &  {\rm Frequency} \\ \hline
		$0$  & $1$  \\ \hline
				
		$2^{m-1}$  &  $2^{2m-2}-1$ \\ \hline
				
		$2^{m-1}\pm2^{\frac{m-1}{2}}$  &  $2^{2m-3}\mp2^\frac{m-3}{2}(2^m-1)$ \\ \hline	
	\end{tabular}
\end{center}
and the Lee weight distribution of $\cc_D$ for $(t_1,t_2)=(1,3)$ is given by
\begin{center} %\label{tab2-13}
	\begin{tabular}{l l}
		\hline {\rm Lee Weight}  &  {\rm Frequency} \\ \hline
		$0$  &  $1$  \\ \hline			
		$2^m$  &  $1$  \\ \hline			
		$2^{m-1}$  &  $2^{m-1}(2^m+2)-2$ \\ \hline			
		$2^{m-1}\pm2^{\frac{m-1}{2}}$  &  $2^{2m-2}-2^{m-1}\ {\rm (each)}$  \\ \hline		
	\end{tabular}
\end{center}
\end{thm}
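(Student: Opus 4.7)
The plan is to reduce the Lee weight of each codeword to a single complex exponential sum over the Teichm\"uller set and then exploit the same-parity hypothesis to collapse it to an exponential sum whose value distribution is supplied by Section \ref{sec-auxiliary}. Writing $S_D(u)=\sum_{d\in D}i^{\Tr_1^m(ud)}$, so that $\omega_L(\bc_u)=n-\re(S_D(u))$, the key preliminary is the observation that for any $x\in\tei$ the element $\Tr_1^m(x)\in\zfour$ reduces modulo $2$ to $\tr(\bar x)\in\ztwo$, where $\bar x\in\f$ denotes the image of $x$ under the canonical map $\rr\to\rr/2\rr\cong\f$. Hence, whenever $t_1$ and $t_2$ share a common parity $\epsilon\in\{0,1\}$, we have $D_{t_1}\cup D_{t_2}=\{x\in\tei:\tr(\bar x)=\epsilon\}$, which is a set of exactly $2^{m-1}$ elements, confirming the length $n=2^{m-1}$ stated in the theorem.

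Next I would multiply by the characteristic function $\tfrac{1}{2}(1+(-1)^{\tr(\bar x)+\epsilon})$ of $D$ inside $\tei$, and combine it with the elementary identity $(-1)^{\tr(\bar x)}=i^{\Tr_1^m(2x)}$ to obtain the clean reduction
\[
S_D(u)=\tfrac{1}{2}\bigl(T(u)+(-1)^{\epsilon}T(u+2)\bigr),
\]
where $T(v):=\sum_{x\in\tei}i^{\Tr_1^m(vx)}$ is one of the two exponential sums whose value distribution is established in Section \ref{sec-auxiliary}. Consequently, the weight distribution of $\cc_D$ is determined by the joint distribution of the pair $\bigl(T(u),T(u+2)\bigr)$ as $u$ ranges over $\rr$.

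Using the Teichm\"uller decomposition $u=a+2b$, I would split into the cases $a=0$ and $a\neq0$. For $a=0$, the sum $T(u)$ collapses to a standard additive character sum on $\f$ which vanishes except at $u=0$; this cleanly isolates the distinguished codewords, namely the zero codeword arising from $u\in\{0,2\}$ in the $(t_1,t_2)=(0,2)$ case and the weight-$2^m$ all-twos codeword arising from $u=2$ in the $(1,3)$ case. For $a\neq0$, the substitution $y=ax$ on $\tei$ reduces $T(u)$ to $T(1+2c)$ with $c=ba^{-1}\in\tei$, and a similar reduction applies to $T(u+2)$ after passing to the Teichm\"uller decomposition of $a+2(b+1)$; the values of both can then be read off from the auxiliary distribution lemmas.

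The main obstacle will be the joint bookkeeping in the generic case $a\neq 0$: since $T(u)$ and $T(u+2)$ are coupled through the shared Teichm\"uller part of $u$, one cannot simply multiply individual frequencies. I would organize the enumeration by fixing $a\neq0$ and letting $b$ run over $\tei$, using the involution $\bar b\mapsto\bar b+1$ on $\f$ to pair up the two contributing values of $T$, and then aggregate across all nonzero $a$. Collecting the resulting values of $\re(S_D(u))$ via $\omega_L(\bc_u)=2^{m-1}-\re(S_D(u))$ then produces the two tables in the statement.
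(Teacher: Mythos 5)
Your reduction is sound and is essentially the paper's: writing $S_D(u)=\tfrac12\bigl(T(u)+(-1)^{\epsilon}T(u+2)\bigr)$ and taking real parts is exactly the identity $N_0-N_2=\tfrac14\bigl(S_+(u)\pm S_+(u+2)\bigr)$ used in Section \ref{sec-proof2}, and your $a=0$ analysis correctly isolates the zero codeword and the weight-$2^m$ codeword. The gap is in the step you yourself flag as the main obstacle. For $a\neq 0$ the whole theorem rests on the \emph{joint} distribution of the pair $\bigl(S_+(u),S_+(u+2)\bigr)$, and the mechanism you propose --- the involution $b\mapsto b\oplus 1$, which swaps $T(u)$ and $T(u+2)$, followed by ``aggregation'' --- only yields the symmetry of that joint distribution under exchanging the two coordinates. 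It does not determine the four frequencies: with $x_1,x_2,x_3,x_4$ denoting the counts of the sign patterns $(+,+),(+,-),(-,+),(-,-)$, the involution gives $x_2=x_3$, and the known marginal distribution (Lemma \ref{lem-chi-non}) gives two more linear relations, but one still needs the second-moment identity $\sum_{a\in\tei^\star}\sum_{b\in\tei}S_+(u)S_+(u+2)=0$ (Lemma \ref{lem-S12}(1), fed into Lemma \ref{lem-S1S2}) to close the system. Nothing in your outline produces that extra equation, so the frequencies in both tables remain undetermined.

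Two further points would need repair even granting the joint distribution. First, your case split ``$a=0$ versus $a\neq 0$'' conceals the fact that $a=1$ is genuinely exceptional: there one has $S_+(u+2)=S_+(u)=2\re(\chi_{Q_1}(b))$ identically (the pair is perfectly correlated and only the equal-sign patterns occur, with multiplicities $2^{m-1}\pm 2^{\frac{m-1}{2}}$ governed by $\Tr_1^m(b)$ via Lemma \ref{lem-chi1-0}), whereas Lemma \ref{lem-S1S2} is stated only for $a\in\tei^\star=\tei\setminus\{0,1\}$ and exhibits all four sign patterns. Folding $a=1$ into the generic count would give wrong frequencies; it must be treated as a separate case, as the paper does. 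Second, your final ``collecting'' step omits the multiplicity of the map $u\mapsto\bc(u)$: in the $(0,2)$ case $\bc(u)=\bc(u+2)$ for all $u$, so the raw counts over the $4^m$ values of $u$ must be halved to obtain the stated frequencies and the codeword count $2^{2m-1}$, while in the $(1,3)$ case the map is injective and no division occurs. Without this observation the two tables cannot both be read off from the same enumeration.
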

	
\begin{rem}\label{rem-t1t2}
When $t_1$ and $t_2$ have different parity, computer experiments suggest that the parameters of the code $\cc_D$ are less favorable than when $t_1$ and $t_2$ share the same parity. As a result, we have not considered the scenario where $t_1$ and $t_2$ have different parity. The Lee weight distribution of the code $\cc_D$ when $t_1$ and $t_2$ have different parity may be of independent interest, which requires further exploration of the properties of trace functions and exponential sums over $\zfour$. %This is left for future work.
\end{rem}
	
\begin{thm}\label{thm3}
Let $m>3$ be odd, $\cc_D$ and $D_t$ defined as \eqref{eq-code} and \eqref{eq-Dt} respectively, and $D=D_{t_1}\cup D_{t_2}\cup D_{t_3}=\tei\backslash D_t$, where $t_1,t_2,t_3,t\in\zfour$. Then, $\cc_D$ has parameters as follows:
%		\begin{center} \label{tab-par3}
%			\begin{tabular}{|c|c| c| c| c| c|}
%				\hline $t$ & $(t_1,t_2,t_3)$ & $n$ & $k_1$ & $k_2$ & $d_L$ \\ \hline
%				$0$ & $(1,2,3)$ & $3\cdot2^{m-2}-2^\frac{m-3}{2}\tau$ & $m$ & $0$ & $3\cdot2^{m-2}-2^\frac{m-3}{2}(\tau+3)$ \\ \hline				
%				$1$ & $(0,2,3)$ & $3\cdot2^{m-2}-2^\frac{m-3}{2}i^{m-1}\tau$ & $m$ & $0$ & $2^{m-1}-2^\frac{m-1}{2}i^{m-1}\tau$ \\ \hline
%				$2$ & $(0,1,3)$ & $3\cdot2^{m-2}+2^\frac{m-3}{2}\tau$ & $m$ & $0$ & $2^{m-1}$ \\ \hline	
%				$3$ & $(0,1,2)$ & $3\cdot2^{m-2}+2^\frac{m-3}{2}i^{m-1}\tau$ & $m$ & $0$ & $2^{m-1}+2^\frac{m-1}{2}i^{m-1}\tau$ \\ \hline			
%			\end{tabular}
%		\end{center}

\begin{center} %\label{tab-par3}
	\begin{tabular}{|c| c| c| c| c|}
		\hline $t$ & $(t_1,t_2,t_3)$ & $n$ & {\rm Number of codewords} & $d_L$ \\ \hline
		$0$ & $(1,2,3)$ & $3\cdot2^{m-2}-2^\frac{m-3}{2}\tau$ & $4^m$ & $3\cdot2^{m-2}-2^\frac{m-3}{2}(\tau+3)$ \\ \hline				
		$1$ & $(0,2,3)$ & $3\cdot2^{m-2}-2^\frac{m-3}{2}i^{m-1}\tau$ & $4^m$ & $2^{m-1}-2^\frac{m-1}{2}i^{m-1}\tau$ \\ \hline
		$2$ & $(0,1,3)$ & $3\cdot2^{m-2}+2^\frac{m-3}{2}\tau$ & $4^m$ & $2^{m-1}$ \\ \hline	
		$3$ & $(0,1,2)$ & $3\cdot2^{m-2}+2^\frac{m-3}{2}i^{m-1}\tau$ & $4^m$ & $2^{m-1}+2^\frac{m-1}{2}i^{m-1}\tau$ \\ \hline			
	\end{tabular}
\end{center}
Moreover, the Lee weight distribution of $\cc_D$ for $t=0$ is given by
\begin{center} %\label{tab3-0}
	\begin{tabular}{l l}
		\hline {\rm Lee Weight}  &  {\rm Frequency} \\ \hline
		$0$  &  $1$  \\ \hline			
		$2^m$  &  $1$ \\ \hline			
		$2^m - 2^\frac{m-1}{2}\tau$  &  $2$ \\ \hline			
		$3\cdot2^{m-2}$  &  $2^m-2$ \\ \hline			
		$3\cdot2^{m-2}-2^\frac{m-1}{2}\tau$  &  $2^m-2$ \\ \hline
		$3\cdot2^{m-2} - 2^\frac{m-3}{2}\tau$  &  $2^{m-2}(2^{m-1}+1+\sigma)+2^\frac{m-3}{2}(2^m-4)\tau$ \\ \hline			
		$3\cdot2^{m-2}-2^\frac{m-3}{2}(\tau\pm1)$  &  $3\cdot2^{m-3}\big(2^{m-1}\pm2^\frac{m+1}{2}-1-\sigma\mp2\tau \big)$ \\ \hline			
        $3\cdot2^{m-2}-2^\frac{m-3}{2}(\tau\pm2)$  &  $2^{m-3}(3\cdot2^{m-1}-5+3\sigma)-2^\frac{m-5}{2}(2^m-4)\tau \ {\rm (each)}$ \\ \hline
        $3\cdot2^{m-2}-2^\frac{m-3}{2}(\tau\pm3)$  &  $2^{m-3}(2^{m-1}\mp2^\frac{m+1}{2}-1-\sigma\pm2\tau)$ \\ \hline
	\end{tabular}
\end{center}
the Lee weight distribution of $\cc_D$ for $t=2$ is given by
\begin{center} %\label{tab3-2}
	\begin{tabular}{l l}
		\hline {\rm Lee Weight}  &  {\rm Frequency} \\ \hline
		$0$  &  $1$  \\ \hline			
		$2^m$  &  $1$ \\ \hline			
		$2^{m-1}$  &  $2$ \\ \hline			
		$3\cdot2^{m-2}$  &  $2^m+2^\frac{m+1}{2}\tau-4$ \\ \hline
		$3\cdot2^{m-2} + 2^\frac{m-1}{2}\tau$  &  $2^m-2^\frac{m+1}{2}\tau$ \\ \hline			
		$3\cdot2^{m-2} + 2^\frac{m-3}{2}\tau$  &  $2^{m-2}(2^{m-1}-3+\sigma)-2^\frac{m-3}{2}(2^m-4)\tau$ \\ \hline
	    $3\cdot2^{m-2}+2^\frac{m-3}{2}(\tau\pm1)$  &  $3\cdot2^{m-3}(2^{m-1}-1-\sigma)\mp2^{m-2}(2^\frac{m-1}{2}-\tau)$ \\ \hline					
		$3\cdot2^{m-2}+2^\frac{m-3}{2}(\tau\pm2)$  &  $2^{m-3}\big( 3\cdot2^{m-1}-1+3\sigma\mp4\tau \big)+2^\frac{m-5}{2}(2^m-4)(\tau\mp2)$ \\ \hline						
		$3\cdot2^{m-2}+2^\frac{m-3}{2}(\tau\pm3)$  &  $2^{m-3}\big(2^{m-1}\mp2^\frac{m+1}{2}-1-\sigma\pm2\tau\big)$ \\ \hline
	\end{tabular}
\end{center}
and the Lee weight distributions of $\cc_D$ for $t=1,3$ are given by
\begin{center} %\label{tab3-odd}
	\begin{tabular}{l l}
		\hline {\rm Lee Weight}  &  {\rm Frequency} \\ \hline
		$0$  &  $1$  \\ \hline			
		$2^{m-1} + 2^\frac{m-1}{2}i^{m+t}\tau$  &  $1$ \\ \hline	
		$3\cdot2^{m-2}$  &  $2^{m-1}-1$ \\ \hline
		$3\cdot2^{m-2} + 2^\frac{m-1}{2}i^{m+t}\tau$  &  $2^{m-1}-1$ \\ \hline				
		$3\cdot2^{m-2} - 2^\frac{m-3}{2}i^{m+t}\tau$  &  $2^{m-1} + 2^\frac{m-1}{2}i^{m+t}\tau$ \\ \hline
		$3\cdot\big(2^{m-2} + 2^\frac{m-3}{2}i^{m+t}\tau\big)$  &  $2^{m-1} - 2^\frac{m-1}{2}i^{m+t}\tau$ \\ \hline		
        $3\cdot2^{m-2} + 2^\frac{m-3}{2}i^{m+t}\tau$  &  $(2^m-2)\big( 2^{m-3} - 2^\frac{m-3}{2}i^{m+t}\tau \big)$ \\ \hline			
		$3\cdot2^{m-2}-2^\frac{m-3}{2}i^{t-1}(i^{m-1}\tau\pm 1)$  &  $(2^m-2)\big( 3\cdot2^{m-4} \pm 2^\frac{m-5}{2}i^{t-1} \big)$ \\ \hline
		$3\cdot2^{m-2}-2^\frac{m-3}{2}i^{t-1}(i^{m-1}\tau\pm 2)$  &  $(2^m-2)\big(3\cdot2^{m-4} - 2^\frac{m-5}{2}i^{t-1}(i^{m-1}\tau\mp2)\big)$ \\ \hline			
		$3\cdot2^{m-2}-2^\frac{m-3}{2}i^{t-1}(i^{m-1}\tau\pm 3)$  &  $(2^m-2)\big( 2^{m-4} \pm 2^\frac{m-5}{2}i^{t-1} \big)$ \\ \hline
	\end{tabular}
\end{center}
\end{thm}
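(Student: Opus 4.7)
The plan is to reduce Theorem~\ref{thm3} to Theorem~\ref{thm1} through the disjoint decomposition $\tei = D_t \cup D$. For each $u \in \rr$, let $\bc_u^{(1)}$, $\bc_u^{(3)}$, and $\tilde{\bc}_u$ denote the codewords of $\cc_{D_t}$, $\cc_{D}$, and the full Teichm\"uller trace code $\cc_{\tei}$ indexed by $u$. Since $\tilde{\bc}_u$ is, up to a coordinate permutation, the concatenation of $\bc_u^{(1)}$ and $\bc_u^{(3)}$, one obtains the identity
\[
\omega_L(\bc_u^{(3)}) \;=\; \bigl(2^m - \re\, S(u)\bigr) - \omega_L(\bc_u^{(1)}),
\]
where $S(u) = \sum_{x \in \tei} i^{\Tr_1^m(ux)}$ is the basic Galois-ring exponential sum analyzed in Section~\ref{sec-auxiliary}. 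This identity is the engine of the proof: once the joint distribution of $\re\, S(u)$ and $\omega_L(\bc_u^{(1)})$ over $u \in \rr$ is known, each row of the four weight tables follows by direct substitution.

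The lengths $n = 2^m - |D_t|$ are immediate from Theorem~\ref{thm1}, and the four claimed values in the parameter table match by direct subtraction. To count codewords I would show that the map $u \mapsto \bc_u^{(3)}$ is injective in all four cases, so $|\cc_D| = 4^m$. The key observation is that the kernel of this map, intersected with the kernel of $u \mapsto \bc_u^{(1)}$, sits inside the kernel of the full trace code $\cc_{\tei}$, which is trivial by non-degeneracy of the Galois-ring trace pairing; a short direct check using the scalar codewords indexed by $u \in \{1,2,3\}$ (each of which is constant on every $D_{t'}$ with value $ut'$) rules out the four potential kernel elements that appeared in the $t = 0$ case of Theorem~\ref{thm1}, since here $D$ never contains all of $\tei \setminus \{x : \Tr_1^m(x) = 0\}$ with the required annihilation property. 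The four minimum distances $d_L$ are then extracted as the smallest nonzero Lee weights in the completed tables.

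The core of the work, and the principal obstacle, is determining the joint distribution of the pair $\bigl(\re\, S(u),\, \re\, T_t(u)\bigr)$ across $u \in \rr$, where $T_t(u) = \sum_{x \in D_t} i^{\Tr_1^m(ux)}$. The marginal of $T_t(u)$ underlies Theorem~\ref{thm1}, and the marginal of $S(u)$ is among the exponential-sum results of Section~\ref{sec-auxiliary}; to pair them I would revisit the stratification of $\rr$ used in the proof of Theorem~\ref{thm1}. Writing $u = a + 2b$ with $a, b \in \tei$, both sums factor through the same $\zfour$-valued quadratic and linear forms on $\tei$ in the variables $(a,b)$, so a single case analysis yields the values of $S(u)$ and $T_t(u)$ simultaneously on each stratum. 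The exceptional strata ($u = 0$, $a = 0$ or $b = 0$, and small-rank conditions on the associated quadratic form) account for the small-frequency rows ($1$, $2$, $2^{m-1}-1$, $2^m - 2$, and so on), while the generic stratum supplies the high-frequency rows; the parity split $t$ even versus $t$ odd drives the appearance of $\tau$ versus $i^{m-1}\tau$ in the tables. Once the joint values are tabulated, plugging them into the identity above yields each table row by row, and the standard consistency checks (frequencies summing to $4^m$ and the first-moment identity $\sum_u \omega_L(\bc_u^{(3)}) = 2 n \cdot 4^{m-1}$) confirm the result.
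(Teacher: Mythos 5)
Your plan is essentially the paper's own proof: the complementation identity $\omega_L(\bc_u^{(3)}) = \big(2^m-\re\, S(u)\big) - \omega_L(\bc_u^{(1)})$ is an algebraic rewriting of the paper's formula $N_0-N_2=\frac{1}{8}\big(3S_+(u)-S_+(u+1)-S_+(u+2)-S_+(u+3)\big)$, and the joint distribution of $\big(\re\, S(u),\re\, T_t(u)\big)$ that you identify as the core difficulty is obtained there exactly as you propose, via the six-case stratification of $u=a+2b$ together with Lemma~\ref{lem-S1S2S3S4} (and the $S_-$ variant for $t\in\{1,3\}$). Two minor corrections: the codeword count $4^m$ already follows from the completed tables (the zero Lee weight occurs only at $u=0$), so the kernel argument is superfluous, and as written it only controls $\ker\bc^{(3)}\cap\ker\bc^{(1)}$ rather than all of $\ker\bc^{(3)}$; moreover the first-moment check should read $\sum_{u\in\rr}\omega_L(\bc_u^{(3)})=n'\cdot 4^{m}$, where $n'$ is the number of nonzero coordinates of $D$ (a uniform element of $\zfour$ has average Lee weight $1$), not $2n\cdot 4^{m-1}$.
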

	
\begin{rem}\label{rem-goodcode}
Computer experiments show that many good linear codes over $\zfour$ can be obtained from our constructions. Table \ref{tab-newcode} displays a selection of $\zfour$-linear codes derived from our theorems, in which ``Best known $d_L$" denotes the best known minimum Lee distance of $\zfour$-linear codes with the parameters $[n,k_1,k_2]$ based on the current database of $\zfour$ codes. The symbol ``-" indicates the absence of a linear code with the same length and type in the existing database.
\end{rem}
	
\begin{table} [ht]
	\caption{Some $\zfour$-linear codes from our theorems for small values of $m$} \label{tab-newcode}
	\begin{center} 
	   \begin{tabular}{c c c c c}
			\hline $m$  &  $[n, k_1, k_2]$  &  {\rm Best known} $d_L$  & {\rm Our} $d_L$  &  {\rm Theorem} \\ \hline		
			%    $3$  &  $[1, 1, 0]$  &  $1$  &  $1$  &  Theorem \ref{thm1} \\ \hline
			%$3$  &  $[3, 3, 0]$  &  $1$  &  $1$  &  Theorem \ref{thm1} \\ \hline
			%$3$  &  $[3, 2, 1]$  &  $2$  &  $2$  &  Theorem \ref{thm1} \\ \hline
			$3$  &  $[4, 2, 1]$  &  -  &  $2$  &  Theorem \ref{thm2} \\ \hline
			$3$  &  $[4, 3, 0]$  &  $2$  &  $2$  &  Theorem \ref{thm2} \\ \hline
			%$3$  &  $[5, 3, 0]$  &  $3$  &  $2$  &  Theorem \ref{thm3} \\ \hline
			%$3$  &  $[7, 3, 0]$  &  $6$  &  $6$  &  Theorem \ref{thm3} \\ \hline
			%$5$  &  $[6, 4, 0]$  &  $3$  &  $2$  &  Theorem \ref{thm1} \\ \hline
			$5$  &  $[6, 5, 0]$  &  $2$  &  $2$  &  Theorem \ref{thm1} \\ \hline
			$5$  &  $[10, 4, 1]$  &  -  &  $6$  &  Theorem \ref{thm1} \\ \hline
			$5$  &  $[10, 5, 0]$  &  $6$  &  $6$  &  Theorem \ref{thm1} \\ \hline
			$5$  &  $[16, 4, 1]$  &  $8$  &  $12$  &  Theorem \ref{thm2} \\ \hline
			$5$  &  $[16, 5, 0]$  &  $10$  &  $12$  &  Theorem \ref{thm2} \\ \hline
			$5$  &  $[22, 5, 0]$  &  $14$  &  $16$  &  Theorem \ref{thm3} \\ \hline
			$5$  &  $[26, 5, 0]$  &  $16$  &  $20$  &  Theorem \ref{thm3} \\ \hline
			$7$  &  $[28, 6, 1]$  &  $12$  &  $20$  &  Theorem \ref{thm1} \\ \hline
			$7$  &  $[28, 7, 0]$  &  $13$  &  $20$  &  Theorem \ref{thm1} \\ \hline
			$7$  &  $[36, 6, 0]$  &  $28$  &  $28$  &  Theorem \ref{thm1} \\ \hline
			$7$  &  $[36, 7, 0]$  &  $28$  &  $28$  &  Theorem \ref{thm1} \\ \hline
			$7$  &  $[64, 6, 1]$  &  $53$  &  $56$  &  Theorem \ref{thm2} \\ \hline
			$7$  &  $[64, 7, 0]$  &  $33$  &  $56$  &  Theorem \ref{thm2} \\ \hline
			%$7$  &  $[92, 7, 0]$  &  $78$  &  $56$  &  Theorem \ref{thm3} \\ \hline		
			$7$  &  $[92, 7, 0]$  &  $78$  &  $80$  &  Theorem \ref{thm3} \\ \hline		
			$7$  &  $[100, 7, 0]$  &  $50$  &  $64$  &  Theorem \ref{thm3} \\ \hline
			$7$  &  $[100, 7, 0]$  &  $50$  &  $72$  &  Theorem \ref{thm3} \\ \hline	
			$9$  &  $[120, 8, 1]$  &  $96$  &  $104$  &  Theorem \ref{thm1} \\ \hline
			$9$  &  $[120, 9, 0]$  &  $96$  &  $104$  &  Theorem \ref{thm1} \\ \hline	
		\end{tabular}
	\end{center}
\end{table}

\begin{rem}\label{rem-even}
In this paper, we focus on the case where $m$ is an odd integer, for which we determine the weight distributions of the constructed codes. Nevertheless, the type of these codes, specifically, determining the precise values of $k_1$ and $k_2$ as defined in \eqref{eq-k1k2}, remains unresolved. When $m$ is even, the linear codes generated by our constructions do not perform as well as those with odd $m$. Specifically, for even $m$, these codes exhibit a larger number of distinct Lee weights, and fewer of them achieve the best known minimum Lee distance according to the existing database of $\zfour$ codes. Our method allows us to identify the possible Lee weights for codes with even $m$. However, we are currently unable to determine the Lee weight distributions of these codes, which remains an open problem for further research.
\end{rem}

\section{Auxiliary lemmas}\label{sec-auxiliary}
	
It should be noted that the Teichm\"{u}ller set $\tei$ of $\rr$ is not closed under the addition operation. Nonetheless, it is significant to observe that any element $z\in\tei$ can be uniquely represented as $z = x+y+2\sqrt{xy}$ for some $x, y \in \tei$. For convenience, define an addition operation $\oplus$ in the Teichm\"{u}ller set $\tei$ as
$$ x \oplus y = x + y + 2\sqrt{xy} $$
for all $x, y \in \tei$. Then it can be readily verified that $(\tei, \oplus, \cdot)$ is isomorphic to $(\f,+,\cdot)$.
Similar to the trace function over $\f$, define the trace function from $(\tei, \oplus, \cdot)$ to $\ztwo$ as
\begin{equation*}%\label{eq-tr}
	\tr_1^m(x) = \oplus_{j=0}^{m-1} x^{2^j},\ x\in\tei.
\end{equation*}

The relation between the trace functions $\Tr_1^m(x)$ and $\tr_1^m(x)$ is described as follows:
\begin{lem}{\rm(\cite{Hammons1994})}\label{lem-Tr-tr}		
	The trace function over GR$(4,m)$ has $2$-adic expansion given by
	\[\Tr_1^m(x) = \tr_1^m(x)+2p(x), \;\; x\in\tei,\]
	where $p(x)$ is defined as	
	\begin{eqnarray*}
		p(x)=\left\{\begin{array}{ll}
		\oplus_{j=1}^{(m-1)/2}\tr_1^m(x^{2^j+1}), & {\rm if}\;m\; {\rm is}\;  {\rm odd}, \\
		\oplus_{j=1}^{m/2-1}\tr_1^m(x^{2^j+1})\oplus\tr_1^{m/2}(x^{2^{m/2}+1}), & {\rm if}\;m\; {\rm is}\;  {\rm even}.
	   \end{array}\right.
	\end{eqnarray*}
\end{lem}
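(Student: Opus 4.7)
The plan is to reduce the identity to a pairwise \emph{carry} formula in $\rr$ and then iterate. The basic building block, for $a,b\in\tei$, is
\[a + b = (a \oplus b) + 2\sqrt{ab},\]
which follows directly from the definition $a \oplus b = a + b + 2\sqrt{ab}$ combined with $-2 \equiv 2 \pmod 4$ in $\rr$; here $\sqrt{\cdot}$ denotes the unique square root in $\tei$, namely the $(m-1)$-fold Frobenius $z \mapsto z^{2^{m-1}}$.

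The main technical step is to lift this to a $k$-element carry formula:
\[\sum_{i=1}^{k} a_i \equiv \left(\oplus_{i=1}^{k} a_i\right) + 2\left(\oplus_{1 \le i < j \le k}\sqrt{a_i a_j}\right) \pmod{4}, \qquad a_i \in \tei,\]
which I would prove by induction on $k$. Writing the partial sum as $S_k = \alpha + 2\beta$ with $\alpha = \oplus_{i \le k} a_i$ and $\beta = \oplus_{i<j\le k}\sqrt{a_i a_j}$, the inductive step applies the two-element identity to $\alpha + a_{k+1}$ and uses $2\gamma + 2\delta \equiv 2(\gamma \oplus \delta) \pmod 4$ for $\gamma,\delta\in\tei$ (the carry produced inside is multiplied by $2$ and hence absorbed). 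Since $(\tei,\oplus,\cdot)\cong(\f,+,\cdot)$ has characteristic $2$ under $\oplus$, the square root is additive there, giving $\sqrt{\alpha\cdot a_{k+1}} = \sqrt{\alpha}\sqrt{a_{k+1}} = \oplus_{i=1}^{k}\sqrt{a_i a_{k+1}}$, which combines with $\beta$ to produce precisely the index set $\{(i,j):1\le i<j\le k+1\}$.

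Specializing to $a_j = x^{2^j}$ for $j=0,\dots,m-1$ yields
\[\Tr_1^m(x) \equiv \tr_1^m(x) + 2\left(\oplus_{0 \le i < j \le m-1}\sqrt{x^{2^i+2^j}}\right) \pmod{4}.\]
To identify the coefficient of $2$ with $p(x)$, I would use $\sqrt{x^{2^i+2^j}} = x^{2^{i-1}+2^{j-1}}$ with exponents taken mod $m$ (since $x^{2^m}=x$ in $\tei$), which is just a relabelling on unordered pairs in $\{0,\dots,m-1\}$. Grouping unordered pairs $\{a,b\}$ by the cyclic difference $d=(b-a)\bmod m \in \{1,\dots,m-1\}$ and noting that $\tr_1^m(x^{2^d+1}) = \oplus_{k=0}^{m-1} x^{2^k+2^{k+d}}$ collects exactly the pairs with difference $d$ or $m-d$: for $m$ odd, the two values are always distinct and non-zero, so summing over $d=1,\dots,(m-1)/2$ visits each pair once, yielding the stated formula.

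The main obstacle is the even-$m$ case. When $d=m/2$, the element $x^{2^{m/2}+1}$ is fixed by the $2^{m/2}$-power Frobenius, so it lies in the Teichm\"{u}ller set of the subring GR$(4,m/2)\subset\rr$; consequently $\tr_1^m(x^{2^{m/2}+1})$ counts each self-paired pair twice and vanishes under $\oplus$. One must verify that replacing this contribution by $\tr_1^{m/2}(x^{2^{m/2}+1})$ restores a count of exactly one per unordered pair. Once this bookkeeping is settled, summing the contributions of all $d$ reproduces the stated formula for $p(x)$ in both parity cases.
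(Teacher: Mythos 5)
The paper does not prove this lemma; it is quoted from Hammons et al.\ \cite{Hammons1994} without proof, so there is no internal argument to compare against. Your derivation is correct and is essentially the classical one: the pairwise carry identity $a+b=(a\oplus b)+2\sqrt{ab}$, its inductive extension to $k$ summands with carry $\oplus_{i<j}\sqrt{a_ia_j}$ (using that $\sqrt{\cdot}$ is an automorphism of $(\tei,\oplus,\cdot)$ and that doubled carries die modulo $4$), and the regrouping of the unordered pairs $\{i,j\}\subset\{0,\dots,m-1\}$ by cyclic difference $d$, which for odd $m$ gives exactly $\oplus_{d=1}^{(m-1)/2}\tr_1^m(x^{2^d+1})$. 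The only piece you leave as a sketch is the even-$m$ bookkeeping for $d=m/2$, but you identify the right mechanism (each such pair is hit twice by $\tr_1^m$ and so cancels, while $\tr_1^{m/2}$ applied to the $2^{m/2}$-Frobenius-fixed element $x^{2^{m/2}+1}$ hits each pair once); completing it is routine, and the paper in any case only invokes the odd-$m$ branch.
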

Leveraging the properties of the trace function over finite fields, Lemma \ref{lem-Tr-tr} readily demonstrates that $2\Tr_1^m(x) = 2\tr_1^m(x)$ and $\Tr_1^m(x^{2^k}) = \Tr_1^m(x)$ hold for any positive integer $k$ and $x\in\tei$, which will be frequently utilized in our subsequent proofs.

Let $u = a + 2b\in\rr$ where $a, b\in\tei$. Define $Q_a(x) = \Tr_1^m(ax)$ and
$$ \chi_{Q_a}(b) = \sum_{x\in\tei}i^{Q_a(x)}(-1)^{\tr_1^m(bx)} = \sum_{x\in\tei}i^{\Tr_1^m(ux)}. $$
	
The following lemmas about the values of $\chi_{Q_a}(b)$ will be used in the subsequent proofs.
	
\begin{lem}{\rm(\cite{Schmidt2009})}\label{lem-chi-non}
    For any $a\ne 0$, the value distribution of the multiset $\{ \chi_{Q_a}(b):\ b\in\tei \}$ for odd $m$ is given as follows:
	$$ \left\{\begin{array}{cl}
		\pm 2^\frac{m-1}{2}(1+i), & 2^{m-2} \pm 2^\frac{m-3}{2}\ {\rm times}, \\
		\pm 2^\frac{m-1}{2}(1-i), & 2^{m-2} \pm 2^\frac{m-3}{2}\ {\rm times}.
	\end{array}\right. $$	
\end{lem}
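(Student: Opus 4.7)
The plan is to exploit the $2$-adic expansion of $\Tr_1^m$ to write $\chi_{Q_a}(b)$ as $S_0(b)+iS_1(b)$ with integer real and imaginary parts, establish the constant magnitude $|\chi_{Q_a}(b)|^2=2^m$ for $a\neq 0$, and then pin down both the four possible values and their frequencies by computing two low-order moments of $\chi_{Q_a}$ in $b$.

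First I would invoke Lemma \ref{lem-Tr-tr} to split
\[
\chi_{Q_a}(b)=\sum_{x\in\tei}i^{\tr_1^m(ax)}(-1)^{p(ax)+\tr_1^m(bx)}
\]
into its real and imaginary parts according to whether $\tr_1^m(ax)=0$ or $1$; both parts are integer character sums over $\tei\cong\f$. Next, to compute $|\chi_{Q_a}(b)|^2=\chi_{Q_a}(b)\overline{\chi_{Q_a}(b)}$, I would substitute $x=y\oplus z$ for $y,z\in\tei$. The key identity needed is
\[
\Tr_1^m(a(y\oplus z))-\Tr_1^m(ay)\equiv \Tr_1^m(az)+2\tr_1^m(a\sqrt{yz})\pmod 4,
\]
which follows from $u\oplus v=u+v+2\sqrt{uv}$ together with the relation $2\Tr_1^m=2\tr_1^m$ recorded after Lemma \ref{lem-Tr-tr}. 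After the substitution, the inner sum over $y$ reduces to an ordinary additive character sum in the bijective variable $\sqrt{yz}$, which vanishes for every $z\neq 0$ because $a\neq 0$, leaving $|\chi_{Q_a}(b)|^2=2^m$.

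Since $S_0(b),S_1(b)\in\mathbb{Z}$ and $S_0(b)^2+S_1(b)^2=2^m$ with $m$ odd, a short $2$-adic descent forces $|S_0(b)|=|S_1(b)|=2^{(m-1)/2}$, so $\chi_{Q_a}(b)$ always lies in the four-element set $\{\pm 2^{(m-1)/2}(1\pm i)\}$. To determine the four frequencies $N_\pm$ and $M_\pm$ of $\pm 2^{(m-1)/2}(1+i)$ and $\pm 2^{(m-1)/2}(1-i)$, I would compute the first two moments of $\chi_{Q_a}$ over $b\in\tei$: orthogonality of $(-1)^{\tr_1^m(bx)}$ in $b$ immediately yields $\sum_b\chi_{Q_a}(b)=2^m$, and the same device combined with $i^{2\Tr_1^m(ax)}=(-1)^{\tr_1^m(ax)}$ and the nontriviality of $x\mapsto\tr_1^m(ax)$ for $a\neq 0$ yields $\sum_b\chi_{Q_a}(b)^2=0$. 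Together with the total count $N_++N_-+M_++M_-=2^m$, this linear system has the unique solution $N_\pm=M_\pm=2^{m-2}\pm 2^{(m-3)/2}$.

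The main obstacle is the Galois-ring identity used to evaluate $|\chi_{Q_a}(b)|^2$: modulo-$4$ arithmetic on $\tei$ must be handled carefully since $\tei$ is not closed under addition in $\rr$, and one must track the cross term $2\tr_1^m(a\sqrt{yz})$ that emerges from $u\oplus v=u+v+2\sqrt{uv}$. Once this identity is in hand, the remainder of the argument is routine: orthogonality of additive characters and the essentially unique two-squares representation of $2^m$ for odd $m$.
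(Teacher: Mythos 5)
Your proposal is correct, and in fact it supplies a proof where the paper gives none: Lemma \ref{lem-chi-non} is simply quoted from \cite{Schmidt2009}, whose original derivation proceeds through the general theory of $\zfour$-valued quadratic forms and the rank of the associated alternating bilinear form. Your route is more elementary and self-contained. The three ingredients all check out: (i) the identity $\Tr_1^m(u(y\oplus z))=\Tr_1^m(uy)+\Tr_1^m(uz)+2\tr_1^m(a\sqrt{yz})$ follows from $\zfour$-linearity of $\Tr_1^m$ together with $2\Tr_1^m=2\tr_1^m$ and $2u=2a$, and since $y\mapsto\sqrt{yz}$ is a bijection of $\tei$ for $z\ne 0$, the inner sum kills every $z\ne0$ and yields $|\chi_{Q_a}(b)|^2=2^m$; (ii) the two-squares/$2$-adic descent (equivalently, unique factorization in $\mathbb{Z}[i]$, where $2^m=(-i)^m(1+i)^{2m}$) pins the value to the four-element set $\{\pm 2^{(m-1)/2}(1\pm i)\}$ for odd $m$; (iii) the moments $\sum_b\chi_{Q_a}(b)=2^m$ and $\sum_b\chi_{Q_a}(b)^2=2^m\sum_x(-1)^{\tr_1^m(ax)}=0$, together with $\chi^2=\pm 2^m i$ according to the type $(1+i)$ versus $(1-i)$, give exactly the linear system whose solution is $N_\pm=M_\pm=2^{m-2}\pm2^{(m-3)/2}$. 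Note also that the paper's Lemma \ref{lem-chi1-0} (from Yang et al.) combined with the counts $|D_t|$ in \eqref{eq-|D|} gives yet a third route to the same distribution; your argument has the advantage of not presupposing the explicit evaluation of $\chi_{Q_1}(0)$.
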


The following result is a direct consequence of Lemma 3 and Lemma 4 in \cite{Yang1996}.

\begin{lem}{\rm(\cite{Yang1996})}\label{lem-chi1-0}
	For odd $m$ and any $u = a + 2b \in\rr$ with $a,b\in\tei$ and $a\ne 0$, we have
	$$ \chi_{Q_a}(b) = i^{-\Tr_1^m(\frac{b}{a})} \chi_{Q_1}(0), $$ where
	$\chi_{Q_1}(0) = 2^\frac{m-1}{2}\tau(1+i^m)$, and the values of $\tau$ are given in \eqref{eq-tau-sig}.	
\end{lem}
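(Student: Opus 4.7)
The plan is to derive the formula in Lemma \ref{lem-chi1-0} in three stages: (i) a multiplicative substitution reducing the general exponential sum $\chi_{Q_a}(b)$ to the special one $\chi_{Q_1}(c)$ with $c=b/a\in\tei$; (ii) an additive shift through the Teichm\"{u}ller addition $\oplus$ that extracts the phase factor $i^{-\Tr_1^m(c)}$ and reduces $\chi_{Q_1}(c)$ to $\chi_{Q_1}(0)$; (iii) an explicit Gauss-sum-type evaluation of $\chi_{Q_1}(0)$ using the 2-adic expansion from Lemma \ref{lem-Tr-tr}.

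For step (i), I would use the $\mathbb{Z}_4$-linearity of $\Tr_1^m$ to expand $\Tr_1^m((a+2b)x)=\Tr_1^m(ax)+2\Tr_1^m(bx)$. Since any nonzero element of $\tei$ is a unit in $\rr$ whose inverse again lies in $\tei$ (because $\tei\setminus\{0\}$ is the cyclic group of $(2^m-1)$-th roots of unity), the change of variables $x=a^{-1}y$ is a bijection of $\tei$. Combined with the identity $2\Tr_1^m=2\tr_1^m$ from the remark following Lemma \ref{lem-Tr-tr}, this immediately gives $\chi_{Q_a}(b)=\chi_{Q_1}(c)$ for $c=ba^{-1}\in\tei$.

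For step (ii), since $(\tei,\oplus)$ is isomorphic to $(\f,+)$, the shift $y=z\oplus c=z+c+2\sqrt{zc}$ is a bijection of $\tei$ for each fixed $c\in\tei$. Substituting this into $\chi_{Q_1}(c)=\sum_{y\in\tei}i^{\Tr_1^m(y)+2\Tr_1^m(cy)}$ and expanding modulo $4$, the exponent becomes
$$\Tr_1^m(z)+\Tr_1^m(c)+2\tr_1^m(\sqrt{zc})+2\tr_1^m(cz)+2\tr_1^m(c^2).$$
Applying the Frobenius invariance $\tr_1^m(\xi^2)=\tr_1^m(\xi)$ to collapse $\tr_1^m(\sqrt{zc})=\tr_1^m(zc)$ and $\tr_1^m(c^2)=\tr_1^m(c)$, the two copies of $2\tr_1^m(cz)$ combine to $4\tr_1^m(cz)\equiv 0\pmod 4$, so the exponent reduces to $\Tr_1^m(z)+3\Tr_1^m(c)\equiv\Tr_1^m(z)-\Tr_1^m(c)\pmod 4$. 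Summation over $z\in\tei$ then yields $\chi_{Q_1}(c)=i^{-\Tr_1^m(c)}\chi_{Q_1}(0)$.

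For step (iii), the 2-adic decomposition from Lemma \ref{lem-Tr-tr} together with the isomorphism $\tei\cong\f$ rewrites
$$\chi_{Q_1}(0)=\sum_{x\in\f}i^{\tr_1^m(x)}(-1)^{q(x)},\qquad q(x)=\sum_{j=1}^{(m-1)/2}\tr_1^m(x^{2^j+1}),$$
a mixed linear-quadratic character sum over $\f$. The main obstacle is the explicit evaluation of this sum: one would split the real and imaginary parts according to the four joint level sets of the pair $(\tr_1^m(x),q(x))$, writing $\chi_{Q_1}(0)=(N_0-N_1)+i(M_0-M_1)$, and then count each $N_j,M_j$ via the classification of the quadratic form $q$ over $\mathbb{F}_2$ (its Witt index / hyperbolic-versus-elliptic type, which depends on $m\bmod 4$). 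This dependence produces exactly the factor $\tau$ from \eqref{eq-tau-sig}, and the final value $\chi_{Q_1}(0)=2^{(m-1)/2}\tau(1+i^m)$ has modulus $2^{m/2}$, which one can cross-check against the nonzero values listed in Lemma \ref{lem-chi-non}.
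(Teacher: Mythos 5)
First, a point of reference: the paper does not prove Lemma \ref{lem-chi1-0} at all; it is imported as ``a direct consequence of Lemma 3 and Lemma 4 in \cite{Yang1996}.'' So your proposal is measured against what a self-contained proof would require, not against an in-paper argument. Your steps (i) and (ii) are correct and complete: the substitution $x=a^{-1}y$ is legitimate because $\tei^*$ is the multiplicative group of $(2^m-1)$-th roots of unity, additivity of $\Tr_1^m$ and $2\Tr_1^m=2\tr_1^m$ give $\chi_{Q_a}(b)=\chi_{Q_1}(b/a)$, and your bookkeeping under the shift $y=z\oplus c$ is right --- the two copies of $2\tr_1^m(cz)$ cancel modulo $4$ after Frobenius invariance, and $\Tr_1^m(c)+2\tr_1^m(c^2)=3\Tr_1^m(c)\equiv-\Tr_1^m(c)\pmod 4$, yielding $\chi_{Q_1}(c)=i^{-\Tr_1^m(c)}\chi_{Q_1}(0)$. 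This is exactly the standard reduction.

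The gap is in step (iii), which is where the entire content of the cited result actually lives. Having reduced to $\chi_{Q_1}(0)=\sum_{x\in\f}i^{\tr_1^m(x)}(-1)^{q(x)}$ with $q(x)=\sum_{j=1}^{(m-1)/2}\tr_1^m(x^{2^j+1})$, you assert that counting the level sets of $(\tr_1^m(x),q(x))$ ``via the classification of the quadratic form $q$ \ldots which depends on $m\bmod 4$'' produces the factor $\tau$. This is a plan, not a proof, and it understates the difficulty. The polarization of $q$ is $B(x,y)=\tr_1^m(xy)+\tr_1^m(x)\tr_1^m(y)$, whose radical is $\{0,1\}$; from this one gets for free that exactly one of the two Walsh values $\sum_x(-1)^{q(x)+\lambda\tr_1^m(x)}$, $\lambda\in\{0,1\}$, is nonzero and equals $\pm2^{(m+1)/2}$, hence that $\chi_{Q_1}(0)=\pm2^{(m-1)/2}(1+i^m)$. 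The residue of $m$ modulo $4$ only selects the factor $1+i^m$; the remaining sign is $\tau$, and it equals $+1$ precisely when $m\equiv\pm1\pmod 8$ and $-1$ when $m\equiv\pm3\pmod 8$ --- it depends on $m$ modulo $8$, not modulo $4$. Determining that sign is the crux of the lemma and requires an additional argument (an induction on $m$ via multiplicativity of these Gauss-type sums over field extensions, the Brown/Milgram invariant of the $\zfour$-valued form, or the explicit counts carried out in \cite{Yang1996}); nothing in your outline supplies it. As written, the proposal proves the reduction $\chi_{Q_a}(b)=i^{-\Tr_1^m(b/a)}\chi_{Q_1}(0)$ but not the stated value of $\chi_{Q_1}(0)$.
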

	
In order to determine the parameters and the weight distributions of $\zfour$-linear codes as presented in our theorems, we introduce two classes of exponential sums defined by
\begin{equation}\label{eq-S}
	S_+(u) = \sum_{x\in\tei}i^{\Tr_1^m(ux)}+\sum_{x\in\tei}i^{3\Tr_1^m(ux)},\
	S_-(u) = \sum_{x\in\tei}i^{\Tr_1^m(ux)}-\sum_{x\in\tei}i^{3\Tr_1^m(ux)},
\end{equation}
where $\tei$ denotes the Teichm\"{u}ller set. We then proceed to analyze the distribution of the values of these two exponential sums. For ease of reference, we define $\tei^*=\tei \backslash\{0\}$ and $\tei^\star=\tei \backslash\{0,1\}$, and these notations will be used throughout the paper.

\begin{lem}\label{lem-S1-dis}
    Let $m$ be an odd integer and $u = a+2b\in\rr$. When $a,b$ run through $\tei$, the value distributions of $S_+(u)$ and $S_-(u)$ are given by
	$$S_+(u) = \left\{\begin{array}{cl}
		2^{m+1}, & \ {\rm once}, \\
		0, & 2^m-1\ {\rm times}, \\
		\pm2^{\frac{m+1}{2}}, & (2^m-1)\big(2^{m-1} \pm 2^{\frac{m-1}{2}}\big)\ {\rm times.}
	\end{array}\right.$$
	$$S_-(u) = \left\{\begin{array}{cl}
		0, & 2^m\ {\rm times}, \\
		\pm2^{\frac{m+1}{2}}i, & 2^{m-1}(2^m-1)\ {\rm times\ (each).}
	\end{array}\right.$$
\end{lem}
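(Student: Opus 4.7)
The plan is to recognize both $S_+(u)$ and $S_-(u)$ as linear combinations of $\chi_{Q_a}(b)$ and its complex conjugate. Since $i^{3z}=i^{-z}$ for every $z\in\zfour$, the second summand in the definition of $S_{\pm}(u)$ equals $\overline{\chi_{Q_a}(b)}$, so
\[ S_+(u) = 2\,\re\bigl(\chi_{Q_a}(b)\bigr), \qquad S_-(u) = 2i\,\im\bigl(\chi_{Q_a}(b)\bigr). \]
The lemma therefore reduces to reading off the real and imaginary parts of $\chi_{Q_a}(b)$ from the data already furnished by Lemmas~\ref{lem-chi-non} and \ref{lem-chi1-0}.

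I would first dispose of the degenerate case $a=0$. Writing $u = 2b$ and using the identity $\Tr_1^m(2bx) = 2\tr_1^m(bx)$ recorded after Lemma~\ref{lem-Tr-tr}, one obtains $i^{\Tr_1^m(ux)} = (-1)^{\tr_1^m(bx)}$ and hence
\[ \chi_{Q_0}(b) = \sum_{x\in\tei}(-1)^{\tr_1^m(bx)}. \]
Under the isomorphism $(\tei,\oplus,\cdot)\cong(\f,+,\cdot)$, this is a standard additive character sum, equal to $2^m$ for $b=0$ and $0$ otherwise. This single nonzero case supplies the unique occurrence of $S_+(u)=2^{m+1}$, while the $2^m-1$ vanishing cases account for $2^m-1$ of the zeros of $S_+(u)$ and $2^m-1$ of the zeros of $S_-(u)$.

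For $a\neq 0$, Lemma~\ref{lem-chi-non} pins $\chi_{Q_a}(b)$ to one of the four values $\pm 2^{(m-1)/2}(1\pm i)$ with prescribed frequencies as $b$ ranges over $\tei$. Projecting onto the real axis yields $S_+(u)\in\{\pm 2^{(m+1)/2}\}$, where each sign is obtained from two of the four branches with common $\re$-sign, so for each fixed $a$ the multiplicity is $2(2^{m-2}\pm 2^{(m-3)/2}) = 2^{m-1}\pm 2^{(m-1)/2}$. Multiplying by the $2^m-1$ admissible values of $a$ and combining with the $a=0$ tally recovers the stated distribution of $S_+$. The analysis of $S_-$ is parallel, except that the $(1+i)$ and $(1-i)$ branches now pair with \emph{opposite} signs of their imaginary parts; the $\pm 2^{(m-3)/2}$ corrections thus cancel and each value $\pm 2^{(m+1)/2}i$ occurs exactly $2^{m-1}$ times per fixed $a$, hence $2^{m-1}(2^m-1)$ times in total, together with the $2^m$ zeros contributed by $a=0$.

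The main difficulty is almost entirely combinatorial bookkeeping: one must pair the four frequencies in Lemma~\ref{lem-chi-non} correctly against the sign patterns of $\re$ and $\im$ on the set $\{\pm(1\pm i)\}$, and observe the crucial phenomenon that the $\pm 2^{(m-3)/2}$ corrections reinforce for $S_+$ but annihilate for $S_-$. A useful consistency check throughout is that the total frequencies in each table sum to $4^m = |\rr|$, matching the number of choices of $u$.
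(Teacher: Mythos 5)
Your proposal is correct and follows essentially the same route as the paper: both identify $S_+(u)=2\,\re\big(\chi_{Q_a}(b)\big)$ and $S_-(u)=2\,\im\big(\chi_{Q_a}(b)\big)i$, dispose of $a=0$ via the binary character sum, and read off the $a\neq 0$ frequencies from Lemma~\ref{lem-chi-non}. Your write-up merely makes explicit the sign-pairing bookkeeping (reinforcement for $\re$, cancellation for $\im$) that the paper leaves implicit.
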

\begin{proof}
Observe by \eqref{eq-S} that
$$ S_+(u) = 2\re\big( \sum_{x\in\tei}i^{\Tr_1^m(ux)} \big) = 2\re\big( \sum_{x\in\tei}i^{\Tr_1^m(ax)}(-1)^{\tr_1^m(bx)} \big). $$
If $a = 0$, then $S_+(u) = 2\re\big(\sum_{x\in\tei}(-1)^{\tr_1^m(bx)}\big)=2^{m+1}$ if $b=0$, and $0$ otherwise.
If $a\in\tei^*$, then by $S_+(u) = 2\re(\chi_{Q_a}(b))$ and Lemma \ref{lem-chi-non}, one can conclude that in this case the values of $S_+(u)=\pm2^{\frac{m+1}{2}}$ occur $(2^m-1)\big(2^{m-1}\pm 2^{\frac{m-1}{2}}\big)$ times, respectively. Thus, the first assertion is established by combining the above two cases. Moreover, the second one can be similarly obtained by the fact that $S_-(u)=2\im\big(\chi_{Q_a}(b)\big)i$, where $\im(x)$ represents the imaginary part of the complex number $x$. This completes the proof.
\end{proof}

To ascertain the weight distributions of our codes as outlined in Theorems \ref{thm1}-\ref{thm3}, additional identities involving $S_+(u)$ and $S_-(u)$ are required.

\begin{lem}\label{lem-S1}
	Let $m$ be odd and $u=a+2b\in\rr$ with $a\in\tei^\star$ and $b\in\tei$. Then
	\begin{enumerate}
		\item[\rm(1)] $ \sum\limits_{a\in \tei^\star}\sum\limits_{b\in \tei} S_+(u) = 2^{m+1}(2^m-2)$;
		\item[\rm(2)] $ \sum\limits_{a\in \tei^\star}\sum\limits_{b\in \tei} S_-(u) = 0$.
	\end{enumerate} 	
\end{lem}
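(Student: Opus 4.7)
The plan is to reduce both sums to a single modulus-squared computation by invoking Lemma~\ref{lem-chi1-0}. First I would observe that since $i^3=-i=\overline{i}$ and complex conjugation commutes with summation, the two exponential sums can be rewritten compactly as
\[
S_+(u)=\chi_{Q_a}(b)+\overline{\chi_{Q_a}(b)}=2\re\bigl(\chi_{Q_a}(b)\bigr),\qquad
S_-(u)=\chi_{Q_a}(b)-\overline{\chi_{Q_a}(b)}=2i\,\im\bigl(\chi_{Q_a}(b)\bigr).
\]
Thus both claims follow once we evaluate $\sum_{a\in\tei^\star}\sum_{b\in\tei}\chi_{Q_a}(b)$ and take real/imaginary parts.

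Next, since $a\in\tei^\star\subset\tei^*$, Lemma~\ref{lem-chi1-0} yields $\chi_{Q_a}(b)=i^{-\Tr_1^m(b/a)}\chi_{Q_1}(0)$. For fixed $a\neq 0$, the map $b\mapsto b/a$ is a bijection on $\tei$, so
\[
\sum_{b\in\tei}\chi_{Q_a}(b)=\chi_{Q_1}(0)\sum_{c\in\tei}i^{-\Tr_1^m(c)}=\chi_{Q_1}(0)\,\overline{\chi_{Q_1}(0)}=|\chi_{Q_1}(0)|^2,
\]
where in the last step I recognize the inner sum as the complex conjugate of $\chi_{Q_1}(0)$. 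Using the explicit value $\chi_{Q_1}(0)=2^{(m-1)/2}\tau(1+i^m)$ from Lemma~\ref{lem-chi1-0}, together with $|\tau|=1$ and $|1+i^m|^2=2$ for odd $m$, I get $|\chi_{Q_1}(0)|^2=2^{m-1}\cdot 2=2^m$. This quantity is independent of $a$.

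Summing over the $2^m-2$ elements $a\in\tei^\star$ then gives
\[
\sum_{a\in\tei^\star}\sum_{b\in\tei}\chi_{Q_a}(b)=(2^m-2)\cdot 2^m,
\]
which is a real number. Taking twice its real part yields assertion (1), namely $2^{m+1}(2^m-2)$, and because the imaginary part vanishes, assertion (2) follows immediately. There is essentially no obstacle here: the only mildly nontrivial ingredient is recognizing the bijection $b\mapsto b/a$ on $\tei$ that collapses the $b$-sum into $\overline{\chi_{Q_1}(0)}$; after that the proof is just bookkeeping with $|\chi_{Q_1}(0)|^2$.
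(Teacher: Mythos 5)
Your proof is correct, but it takes a genuinely different route from the paper's. The paper proves (1) and (2) by interchanging the order of summation and using the orthogonality relation $\sum_{b\in\tei}(-1)^{\tr_1^m(bx)}=2^m$ if $x=0$ and $0$ otherwise: only the term $x=0$ survives, where each character value equals $1$, so each of the two constituent sums contributes $2^m(2^m-2)$ directly, with no need to know the value of $\chi_{Q_1}(0)$. You instead invoke Lemma~\ref{lem-chi1-0} to write $\chi_{Q_a}(b)=i^{-\Tr_1^m(b/a)}\chi_{Q_1}(0)$, exploit that $b\mapsto b/a$ is a bijection on $\tei$ (valid, since $\tei^*$ is a multiplicative group), and collapse the $b$-sum to $|\chi_{Q_1}(0)|^2=2^m$; all the individual steps check out, including $|1+i^m|^2=2$ for odd $m$ and $|\tei^\star|=2^m-2$. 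Your argument is arguably slicker here and makes the realness of the total (hence assertion (2)) completely transparent, but it leans on the nontrivial Gauss-sum evaluation in Lemma~\ref{lem-chi1-0}, whereas the paper's orthogonality computation is elementary and, more importantly, is exactly the template that gets reused for the product sums in Lemmas~\ref{lem-S12}--\ref{lem-S1234}, where your reduction would not carry over as cleanly.
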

\begin{proof}
By \eqref{eq-S} and the fact $\sum_{b\in \tei}(-1)^{\tr_1^m(bx)}=2^m$ if $x=0$ and 0 otherwise, one gets
	\begin{eqnarray*}
		\sum_{a\in \tei^\star}\sum_{b\in \tei} S_+(u) & = &
		\sum_{a\in \tei^\star}\sum_{b\in \tei}\big( \sum_{x\in\tei}i^{\Tr_1^m(ax+2bx)} + \sum_{x\in\tei}i^{\Tr_1^m(3ax+2bx)} \big) \\
		& = & \sum_{x\in\tei}\sum_{a\in \tei^\star}i^{\Tr_1^m(ax)} \sum_{b\in \tei}(-1)^{\tr_1^m(bx)}+\sum_{x\in\tei}\sum_{a\in \tei^\star}i^{\Tr_1^m(3ax)} \sum_{b\in \tei}(-1)^{\tr_1^m(bx)} \\
		& = & 2^{m+1}(2^m-2).
	\end{eqnarray*}
	The second assertion can be similarly obtained. This completes the proof.
\end{proof}
	
\begin{lem}\label{lem-S12}
	With the same notations, we have
	\begin{enumerate}
		\item[\rm(1)] $ \sum\limits_{a\in \tei^\star}\sum\limits_{b\in \tei}S_+(u)S_+(u+2) = \sum\limits_{a\in \tei^\star}\sum\limits_{b\in \tei}S_-(u)S_-(u+2) = 0$;
		\item[\rm(2)] $ \sum\limits_{a\in \tei^\star}\sum\limits_{b\in \tei}S_+(u)S_+(u+1) = 2^{m+1}\big( 2^m+(2^m-4)2^\frac{m-1}{2}\tau \big) $;
		\item[\rm(3)] $ \sum\limits_{a\in \tei^\star}\sum\limits_{b\in \tei} S_+(u)S_-(u+1) = -\sum\limits_{a\in \tei^\star}\sum\limits_{b\in \tei}S_+(u+1)S_-(u) = 2^\frac{3m+1}{2}(2^m-2)i^m\tau$.
	\end{enumerate}
\end{lem}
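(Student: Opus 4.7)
My plan is to expand each product $S_{\pm}(u)S_{\pm}(u+k)$ into four pieces indexed by $(\epsilon_1,\epsilon_2)\in\{1,3\}^2$ (with $\pm$ signs inherited from $S_-$), separate the $b$-dependence from the $a$-dependence, and invoke orthogonality of the characters $(-1)^{\tr_1^m(b\,\cdot)}$ on $(\tei,\oplus)\cong(\f,+)$ to carry out the inner $b$-summation. Two identities do most of the work. First, since $2\epsilon\equiv 2\pmod 4$ for $\epsilon\in\{1,3\}$ and $bx\in\tei$ whenever $b,x\in\tei$, the fact that $2\Tr_1^m(bx)=2\tr_1^m(bx)$ combined with $\tr_1^m$-linearity yields
\[ i^{\epsilon_1\Tr_1^m(ux)+\epsilon_2\Tr_1^m(uy)}=i^{\epsilon_1\Tr_1^m(ax)+\epsilon_2\Tr_1^m(ay)}(-1)^{\tr_1^m(b(x\oplus y))}.\]
Second, the $\rr$-additivity of $\Tr_1^m$ gives $\Tr_1^m((u+k)y)=\Tr_1^m(uy)+k\Tr_1^m(y)$, peeling off a factor $i^{k\epsilon_2\Tr_1^m(y)}$ independent of $u$. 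Summing over $b\in\tei$ then forces $y=x$ (contributing $2^m$), reducing each task to an $(a,x)$-sum with $a\in\tei^\star$ and $x\in\tei$.

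For assertion (1), with $k=2$, the remaining exponent becomes $(\epsilon_1+\epsilon_2)\Tr_1^m(ax)+2\epsilon_2\tr_1^m(x)$. The four cases split according to $\epsilon_1+\epsilon_2\pmod 4$. When this vanishes, only the factor $(-1)^{\tr_1^m(x)}$ survives, which has zero sum on $\tei$. When it equals $2$, writing $2\Tr_1^m(ax)=2\tr_1^m(ax)$ and using $\tr_1^m$-linearity with the isomorphism $(\tei,\oplus)\cong(\f,+)$, the $x$-sum collapses to $2^m\mathbf{1}_{a=1}$, which vanishes because $a$ is restricted to $\tei^\star=\tei\setminus\{0,1\}$. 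Hence every term vanishes, and the identical argument treats $S_-(u)S_-(u+2)$.

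For (2) and (3), with $k=1$, the extra factor $i^{\epsilon_2\Tr_1^m(x)}$ prevents the $\epsilon_1+\epsilon_2\equiv 0$ cases from collapsing to the trace character. The cases $(\epsilon_1,\epsilon_2)=(1,3)$ and $(3,1)$ carry no residual $a$-dependence, so the $a$-sum gives $(2^m-2)$ times $\sum_{x\in\tei}i^{\pm\Tr_1^m(x)}\in\{\alpha,\bar\alpha\}$, where $\alpha:=\chi_{Q_1}(0)=2^{(m-1)/2}\tau(1+i^m)$ by Lemma~\ref{lem-chi1-0}. The cases $(1,1)$ and $(3,3)$ rest on the identity $\sum_{a\in\tei^\star}(-1)^{\tr_1^m(ax)}=2^m\mathbf{1}_{x=0}-1-(-1)^{\tr_1^m(x)}$ together with $(-1)^{\tr_1^m(x)}i^{\Tr_1^m(x)}=i^{3\Tr_1^m(x)}$; each contributes $2^m-2\re(\alpha)=2^m-2^{(m+1)/2}\tau$. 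For (2) the four pieces combine (after the overall $2^m$) to $2^{m+1}(2^m+(2^m-4)2^{(m-1)/2}\tau)$. For (3) the sign flips from $S_-$ cancel the $(1,1)$ and $(3,3)$ contributions, leaving $2^m(2^m-2)(\alpha-\bar\alpha)=2^m(2^m-2)\cdot 2i\,\im(\alpha)$; noting $2i\,\im(\alpha)=2^{(m+1)/2}i^m\tau$ produces the claimed $2^{(3m+1)/2}(2^m-2)i^m\tau$. Exchanging the roles of $u$ and $u+1$ swaps $\alpha\leftrightarrow\bar\alpha$ in the decisive difference, giving the sign reversal in the second equality of (3).

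The main obstacle I foresee is the bookkeeping in (3): converting $\alpha-\bar\alpha$ into the closed form $2^{(m+1)/2}i^m\tau$ requires tracking how the case split $m\equiv 1,3\pmod 4$ in the definition of $\tau$ interacts with $1+i^m\in\{1+i,1-i\}$. The cleanest summary is $\im(\alpha)=2^{(m-1)/2}\tau\cdot\im(i^m)$ together with the identity $i\cdot\im(i^m)=i^m$ valid for odd $m$, which makes the two cases collapse into a single uniform expression.
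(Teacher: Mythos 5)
Your proposal is correct and follows essentially the same route as the paper: expand each product into four pieces, use orthogonality of $(-1)^{\tr_1^m(b\,\cdot)}$ to collapse the $b$-sum to the diagonal $y=x$, and then evaluate the residual $a$-sum, with the case $\epsilon_1+\epsilon_2\equiv 2\pmod 4$ vanishing precisely because $a\ne 1$. The paper only writes out assertion (1) and declares the rest analogous; your treatment of (2) and (3) via $\alpha=\chi_{Q_1}(0)$ and the identities $2\re(\alpha)=2^{(m+1)/2}\tau$, $2i\,\im(\alpha)=2^{(m+1)/2}i^m\tau$ correctly supplies the omitted details and reproduces the stated values.
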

\begin{proof}
	We only give the proof for Case (1) here since the others can be proved in the same manner.	Substituting $u=a+2b$, a straightforward calculation yields:
	{\footnotesize
	   $$\begin{aligned}
			& \sum_{a\in \tei^\star}\sum_{b\in \tei} S_+(u)S_+(u+2) \\ =
			& \sum_{a\in \tei^\star}\sum_{b\in \tei} \big(\sum_{x\in\tei}i^{\Tr_1^m(ax+2bx)} + \sum_{x\in\tei}i^{\Tr_1^m(3ax+2bx)}\big)  \big( \sum_{y\in\tei}i^{\Tr_1^m(2y+ay+2by)} + \sum_{y\in\tei}i^{\Tr_1^m(2y+3ay+2by)} \big) \\ =
			& \big(\sum_{a\in \tei^\star}\sum_{x,y\in\tei}(i^{\Tr_1^m(2y+ax+ay)}+i^{\Tr_1^m(2y+ax+3ay)}+i^{\Tr_1^m(2y+3ax+ay)}+ i^{\Tr_1^m(2y+3ax+3ay)}) \big)\sum_{b\in \tei} (-1)^{\tr_1^m(b(x \oplus y))}
			\\ =& 2^m\sum_{a\in \tei^\star}\sum_{y\in\tei} \big( 2i^{\Tr_1^m(2y+2ay)}+ 2i^{\Tr_1^m(2y)} \big)
			\\ =& 0
		\end{aligned}$$
	}
	due to $a\ne 1$.
	Similarly, one can obtain that
	\[\sum_{a\in \tei^\star}\sum_{b\in \tei} S_-(u)S_-(u+2) = 2^m\sum_{a\in \tei^\star}\sum_{y\in\tei} \big( 2i^{\Tr_1^m(2y+2ay)} - 2i^{\Tr_1^m(2y)} \big) = 0. \]
    This completes the proof.
\end{proof}

\begin{lem}\label{lem-S123}
	Let $m$ be odd and $u=a+2b\in\rr$ with $a\in\tei^\star$ and $b\in\tei$. Then
    \begin{enumerate}
	   \item[\rm(1)] $ \sum\limits_{a\in\tei^\star}\sum\limits_{b\in \tei} S_+(u)S_+(u+2)S_+(u+1) = 2^{2m+3}(2^\frac{m-1}{2}\tau-1)$;
	   \item[\rm(2)] $ \sum\limits_{a\in \tei^\star}\sum\limits_{b\in \tei} S_+(u)S_+(u+2)S_-(u+1) = \sum\limits_{a\in \tei^\star}\sum\limits_{b\in \tei} S_+(u)S_-(u+1)S_-(u+3) = 0$.
    \end{enumerate}
\end{lem}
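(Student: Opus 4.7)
The plan is to follow the template of the proof of Lemma~\ref{lem-S12}, now for a triple product of exponential sums. Writing $u=a+2b$ and expanding each of the three factors $S_\pm(\cdot)$ into its two $i$-power summands, the product becomes an $8$-term expansion indexed by signs $(\epsilon_1,\epsilon_2,\epsilon_3)\in\{\pm 1\}^3$; each term is a triple sum over $x,y,z\in\tei$ of the form $i^{\epsilon_1\Tr_1^m(ux)+\epsilon_2\Tr_1^m((u+\alpha)y)+\epsilon_3\Tr_1^m((u+\beta)z)}$ with the shifts $(\alpha,\beta)$ read off from the statement, and with an additional $\epsilon$-weight from each $S_-$ factor. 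First I would isolate the $b$-dependent part of each exponent: since $2\cdot(\pm 1)\equiv 2\pmod 4$, this part reduces to $2\Tr_1^m(b(x+y+z))$ independently of signs and shifts. Pushing the $b$-sum inside, the orthogonality $\sum_{b\in\tei}(-1)^{\tr_1^m(bw_1)}=2^m\mathbf{1}[w_1=0]$ applied to the Teichm\"uller part $w_1=x\oplus y\oplus z$ of $x+y+z\in\rr$ forces $z=x\oplus y$ and contributes a factor $2^m$, collapsing each of the eight terms to a double sum over $a\in\tei^\star$ and $x,y\in\tei$.

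After substituting $z=x\oplus y$, I would simplify each remaining exponent using $\Tr_1^m(a(x\oplus y))=\Tr_1^m(ax)+\Tr_1^m(ay)+2\Tr_1^m(a\sqrt{xy})$ and $\Tr_1^m(x\oplus y)=\Tr_1^m(x)+\Tr_1^m(y)+2\Tr_1^m(\sqrt{xy})$, and then sum over the eight sign patterns. The sole obstruction to a clean evaluation is the cross term $2\Tr_1^m((a+1)\sqrt{xy})$, which couples $a$ to the pair $(x,y)$ in a non-multiplicative way. To handle it I would change variables via $y=xv^2$ for $v\in\tei$ (treating the $x=0$ contribution separately), so that $\sqrt{xy}=xv$ and the cross term becomes an ordinary linear trace in the new Teichm\"uller variable $v$; the inner sums then factor into products of $\chi_{Q_{a'}}(b')$-type sums.

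For part~(1), evaluating these $\chi$-sums via Lemma~\ref{lem-chi1-0} together with the explicit value $\chi_{Q_1}(0)=2^{(m-1)/2}\tau(1+i^m)$, and adding elementary counts for the boundary contributions ($x=0$, $y=0$) and the excluded $a=1$ term, yields the target $2^{2m+3}(2^{(m-1)/2}\tau-1)$: the main Gauss-sum piece provides the $2^{(m-1)/2}\tau$ factor while the $-1$ correction records those boundary/excluded terms. For part~(2), each triple product is already purely imaginary (since $S_-$ is purely imaginary and $S_+$ is real), so the vanishing is a finer cancellation than parity alone. Following the same expansion and $b$-summation, the extra $\epsilon$-weights from the $S_-$ factors pair up sign-conjugate terms in the eight-term expansion, and after the $y=xv^2$ substitution the remaining contributions should cancel either termwise or through the symmetry $(x,y)\leftrightarrow(y,x)$ combined with the antisymmetry of $S_-$.

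The hard part will be the combined bookkeeping of the eight sign patterns together with the non-multiplicative coupling introduced by the $\sqrt{xy}$ cross term. Once this term is linearised by the $y=xv^2$ substitution, the residual sums are all of $\chi_{Q_{a'}}(b')$-type and reduce, via Lemmas~\ref{lem-chi-non} and~\ref{lem-chi1-0}, to explicit evaluations; the final step is matching the explicit constants $\tau$ and powers of $2$ appearing in the statement.
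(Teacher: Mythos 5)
Your opening moves coincide with the paper's: expand the triple product into eight sign-indexed terms, push the $b$-sum inside, and use orthogonality to force $z=x\oplus y$ at a cost of $2^m$. After that there is a genuine gap, and it stems from a misdiagnosis. You identify the cross term $2\Tr_1^m((a+1)\sqrt{xy})$ as the obstruction and propose to remove it with the substitution $y=xv^2$; but that substitution does not decouple $a$ from the other variables (the $a$-dependent part becomes $2\tr_1^m(axv)$, still trilinear), so the promised factorization into $\chi_{Q_{a'}}(b')$-type sums does not materialize as described. The point you are missing is that no such manoeuvre is needed: in each of the eight terms the coefficient of $a$ in the exponent is $(\epsilon_1+\epsilon_3)x+(\epsilon_2+\epsilon_3)y+2\epsilon_3\sqrt{xy}$ with $\epsilon_j\in\{1,3\}$, which is $\equiv 0 \pmod 2$, so the entire $a$-dependence collapses to a binary character $(-1)^{\tr_1^m(a\,w(x,y))}$ for some $\oplus$-combination $w(x,y)$ of $x$, $y$, $\sqrt{xy}$. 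The inner sum over $a\in\tei^\star$ is then simply $2^m-2$ if $w=0$ and $-1-(-1)^{\tr_1^m(w)}$ otherwise, and the two branches are exactly where the answer comes from. The first branch requires determining the vanishing locus of $w$ --- e.g.\ $x\oplus y\oplus\sqrt{xy}=0$ forces $x=y=0$ because $(\sqrt{x}\oplus\sqrt{y})(x\oplus y\oplus\sqrt{xy})=\sqrt{x}^3\oplus\sqrt{y}^3$ and cubing is injective on $\tei^*$ for odd $m$ --- which is precisely where the hypothesis on $m$ enters and is absent from your plan. The second branch leaves double sums such as $\sum_{x,y\in\tei}i^{\Tr_1^m(x+3y+2xy)}$, which the paper evaluates by summing over $y$ first and applying Lemma \ref{lem-chi1-0}; ``adding elementary counts for the boundary contributions'' does not account for either piece.

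For part (2), ``the remaining contributions should cancel either termwise or through the symmetry $(x,y)\leftrightarrow(y,x)$'' is an expectation rather than an argument; the paper proves (2) by the same explicit computation as (1). A clean symmetry does exist for the first identity: the substitutions $u\mapsto 3u$ and $u\mapsto u+2$ each permute $\{a+2b:a\in\tei^\star,\,b\in\tei\}$, and composing them fixes $S_+(u)S_+(u+2)$ while negating the $S_-$ factor, so $\sum S_+(u)S_+(u+2)S_-(u+1)$ equals its own negative. But $\sum S_+(u)S_-(u+1)S_-(u+3)$ is invariant under both substitutions, so for that identity you cannot avoid the direct computation.
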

\begin{proof}
	See Appendix \ref{append-a}.
\end{proof}

\begin{lem}\label{lem-S1234}
	Let $m$ be an odd integer, $u=a+2b\in\rr$ with $a\in\tei^\star$ and $b\in\tei$,  and $\sigma$ be given as in \eqref{eq-tau-sig}. Then we have
    \begin{enumerate}
	   \item[\rm(1)] $ \sum\limits_{a\in\tei^\star}\!\sum\limits_{b\in \tei}\! S_+(u)S_+(u+2)S_+(u+1)S_+(u+3) = 2^{3m+3}\sigma$;
	   \item[\rm(2)] $ \sum\limits_{a\in \tei^\star}\sum\limits_{b\in \tei} S_+(u)S_+(u+2)S_-(u+1)S_-(u+3) = 0$.
    \end{enumerate}
\end{lem}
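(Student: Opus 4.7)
The plan is to expand the 4-fold product using the definition \eqref{eq-S} and reduce to a character sum over the Teichm\"uller set, mirroring the strategy of Lemmas \ref{lem-S12} and \ref{lem-S123}. For part (1), I would write
\[
\prod_{j=0}^{3}S_+(u+j)=\sum_{\vec x\in\tei^4}\sum_{\vec e\in\{1,3\}^4}i^{\sum_{j}e_j\Tr_1^m((u+j)x_j)}.
\]
With $u=a+2b$ and $\Tr_1^m((u+j)x_j)=\Tr_1^m(ax_j)+2\tr_1^m(bx_j)+j\Tr_1^m(x_j)$, the $b$-dependent factor equals $(-1)^{\tr_1^m(b(x_0\oplus x_1\oplus x_2\oplus x_3))}$ because every $e_j$ is odd. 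Summing first over $b\in\tei$ by orthogonality forces the constraint $x_0\oplus x_1\oplus x_2\oplus x_3=0$ and contributes a factor of $2^m$.

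Under this constraint, $\sum_{j}e_jx_j\in 2\rr$, so $\sum_{j}e_jx_j=2\gamma$ for a unique $\gamma=\gamma(\vec x,\vec e)\in\tei$ whose expression is derived by iterating the identity $x+y=(x\oplus y)+2\sqrt{xy}$ in $\rr$; then $\Tr_1^m(a\sum_{j}e_jx_j)=2\tr_1^m(a\gamma)$ becomes a $\pm 1$ factor. Parametrizing $x_3=x_0\oplus x_1\oplus x_2$ and expanding the sum over $\vec e\in\{1,3\}^4$, the surviving terms after pairwise cancellation reduce to a sum over $a\in\tei^\star$ and $(x_0,x_1,x_2)\in\tei^3$ of signed characters, which by the Frobenius identity $\tr_1^m(y^2)=\tr_1^m(y)$ collapses to a character sum over $\f$ whose value depends on whether $\mathbb{F}_{2^3}\subseteq\f$. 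This evaluates to $-3$ exactly when $3\mid m$ and to $0$ otherwise, producing the factor $\sigma$ and the claimed $2^{3m+3}\sigma$.

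For part (2), the same expansion applies but each $S_-$ factor inserts a weight $e_j$ (identifying $e_j\in\{1,3\}$ with $\pm 1$) into the sum over $\vec e$. After the same $b$-orthogonality reduction, the inner sums $\sum_{e_j=\pm 1}e_j\, i^{e_j\Tr_1^m((u+j)x_j)}$ for $j=1,3$ vanish whenever the trace is even, restricting to $\vec x$ for which both $\Tr_1^m((u+1)x_1)$ and $\Tr_1^m((u+3)x_3)$ are odd. Under these parity restrictions together with the $\oplus$-constraint, the remaining sum pairs up and cancels in the same manner as the vanishing contributions of Lemma \ref{lem-S123}(2). The main obstacle is in part (1): tracking the ``carry'' terms $2\sqrt{x_ix_j}$ arising when converting $\sum_{j}e_jx_j$ into canonical form $2\gamma$, and recognizing the resulting character sum on $\f$ as the subfield-sensitive one whose value is $-3$ iff $3\mid m$.
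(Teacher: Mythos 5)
Your strategy coincides with the paper's: expand the four-fold product over sign vectors $\vec e\in\{1,3\}^4$, sum over $b$ first so that orthogonality imposes $x_0\oplus x_1\oplus x_2\oplus x_3=0$ and contributes $2^m$, evaluate the $a$-sum according to whether the ``carry'' element $\gamma$ vanishes, and observe that the vanishing locus reduces to a cubic over $\tei$ whose solvability is governed by $3\mid m$. The paper implements exactly this by splitting the product into eight sums $\theta_1,\dots,\theta_8$ (plus conjugates), showing $\theta_1=\theta_4=\theta_6=\theta_7=0$ and $\theta_2=\theta_3=\theta_5=\theta_8=2^{3m}\sigma$, with the condition $\sqrt{xy}\oplus\sqrt{xz}\oplus\sqrt{yz}=0$ ultimately localizing on the roots of $\mu^3\oplus\mu\oplus1=0$.

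There is, however, a genuine gap at the decisive step of part (1). When $3\mid m$ the surviving sum localizes on the three roots $\mu,\mu^2,\mu^4\in\tei$ of $\mu^3\oplus\mu\oplus1=0$, each weighted by the fourth root of unity $i^{\Tr_1^m(3\mu)}$. Knowing that $\mathbb{F}_{2^3}\subseteq\f$ only tells you the cubic has three roots; it does not determine this weight, and the mod-$2$ identity $\tr_1^m(y^2)=\tr_1^m(y)$ that you invoke cannot resolve it, because the sum is $\zfour$-valued rather than a $\pm1$ character sum over $\f$. One must compute $\Tr_1^m(\mu)$ exactly modulo $4$: the paper does this via the $2$-adic expansion $\Tr_1^m=\tr_1^m+2p$ of Lemma~\ref{lem-Tr-tr}, the relation $\mu^7=1$, and a count of $\tr_1^m(\mu^{2^j+1})$ for $1\le j\le (m-1)/2$ according to $j\bmod 3$, obtaining $\Tr_1^m(\mu)=2$ and hence the weight $i^{6}=-1$ that makes $\sigma=-3$ rather than $+3$. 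As written, your sketch asserts the value $-3$ with no mechanism for producing the sign. A smaller point: the ``pairwise cancellation'' you invoke to dispose of the non-surviving terms is, in the paper, an explicit evaluation (their vanishing locus forces $x=z=v$ and the residual sums are full character sums over $\tei$ that vanish), not a pairing argument; and part (2) is only gestured at, though the paper likewise omits those details.
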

\begin{proof}
	%See \nameref{append-b}.
    See Appendix \ref{append-b}.
\end{proof}
%$\sigma=2^m$ when $3\nmid m$ and $\sigma = -2^{m+1}$ when $3\mid m$.
	
Notice that $S_+(u)=\pm 2^\frac{m+1}{2}$ for $u=a+2b$ when $a\ne 0$ according to Lemma \ref{lem-S1-dis}.

\begin{lem}\label{lem-S1S2}
	Let $m$ be odd and $u=a+2b\in\rr$ with $a\in\tei^\star$ and $b\in\tei$. Then the value distribution of $\big( S_+(u), S_+(u+2) \big)$ is shown as
	\begin{center} %\label{tab-S1S2}
		\begin{tabular}{c c c}
			\hline $S_+(u)$  &  $S_+(u+2)$  & {\rm Frequency} \\ \hline
			$2^\frac{m+1}{2}$  &  $2^\frac{m+1}{2}$  &  $(2^m-2)(2^{m-2}+2^\frac{m-1}{2})$ \\ \hline
			$2^\frac{m+1}{2}$  &  $-2^\frac{m+1}{2}$  &  $2^{2m-2}-2^{m-1}$ \\ \hline
			$-2^\frac{m+1}{2}$  &  $2^\frac{m+1}{2}$  &  $2^{2m-2}-2^{m-1}$ \\ \hline
			$-2^\frac{m+1}{2}$  &  $-2^\frac{m+1}{2}$  &  $(2^m-2)(2^{m-2}-2^\frac{m-1}{2})$ \\ \hline
		\end{tabular}
	\end{center}
\end{lem}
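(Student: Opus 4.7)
The plan is to exploit the fact, noted just before the lemma, that each of $S_+(u)$ and $S_+(u+2)$ takes only the two values $\pm 2^{(m+1)/2}$ when $a\in\tei^\star$. Writing $N_{\epsilon_1\epsilon_2}$ for the number of pairs $(a,b)\in\tei^\star\times\tei$ with $S_+(u)=\epsilon_1\,2^{(m+1)/2}$ and $S_+(u+2)=\epsilon_2\,2^{(m+1)/2}$, I need four independent linear relations among these four unknowns in order to read off the table.

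The four relations I would gather are: the trivial count $\sum_{\epsilon_1,\epsilon_2}N_{\epsilon_1\epsilon_2}=(2^m-2)\cdot 2^m$; the marginal sum $\sum_{a,b}S_+(u)=2^{m+1}(2^m-2)$ from Lemma~\ref{lem-S1}(1); the corresponding marginal sum $\sum_{a,b}S_+(u+2)$; and the second moment $\sum_{a,b}S_+(u)S_+(u+2)=0$ from Lemma~\ref{lem-S12}(1). To handle the second marginal I plan to use the ring identity $u+2=a+2b+2=a+2(b\oplus 1)$ in $\rr$, which follows from $1+b=(1\oplus b)+2\sqrt{b}$ combined with $4\sqrt{b}=0$. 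Since $b\mapsto b\oplus 1$ is a bijection of $\tei$, reindexing converts $\sum_{a,b}S_+(u+2)$ into $\sum_{a,b'}S_+(a+2b')$, which also equals $2^{m+1}(2^m-2)$ by Lemma~\ref{lem-S1}(1).

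Translating these marginal and second-moment identities into equations in the $N_{\epsilon_1\epsilon_2}$ gives a small linear system that unpacks cleanly: comparing the two marginals forces $N_{+-}=N_{-+}$; the vanishing of the second moment then forces $N_{++}+N_{--}=2N_{+-}$; combining with the total count yields $N_{+-}=N_{-+}=2^{2m-2}-2^{m-1}$; and one marginal finally gives $N_{++}-N_{--}=2^{(m+1)/2}(2^m-2)$, from which $N_{++}=(2^m-2)(2^{m-2}+2^{(m-1)/2})$ and $N_{--}=(2^m-2)(2^{m-2}-2^{(m-1)/2})$, matching the four rows of the table.

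The only step that is not mere bookkeeping is the substitution identity $u+2=a+2(b\oplus 1)$, which is needed precisely because $\tei$ is not closed under the ordinary addition of $\rr$; without it one cannot legitimately reuse Lemma~\ref{lem-S1}(1) for the second marginal. Once this identity is established the rest is a routine $4\times 4$ linear system resolved by a couple of subtractions, and no new exponential-sum evaluation is required beyond the three inputs from Lemmas~\ref{lem-S1} and \ref{lem-S12}.
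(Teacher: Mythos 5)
Your proof is correct and follows essentially the same route as the paper: set up the four unknown frequencies, use the total count, the two marginal sums from Lemma~\ref{lem-S1}(1), and the vanishing second moment from Lemma~\ref{lem-S12}(1), then solve the resulting linear system. The only difference is that you make explicit the reindexing $u+2=a+2(b\oplus 1)$ needed to apply Lemma~\ref{lem-S1}(1) to the second marginal, a step the paper leaves implicit.
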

\begin{proof}
	When $a$ ranges over $\tei^\star$ and $b$ runs through $\tei$, according to Lemma \ref{lem-S1-dis}, assume that the pair $\big( S_+(u), S_+(u+2) \big)$ takes values of $(2^\frac{m+1}{2},2^\frac{m+1}{2})$, $(2^\frac{m+1}{2},-2^\frac{m+1}{2})$, $(-2^\frac{m+1}{2},2^\frac{m+1}{2})$, $(-2^\frac{m+1}{2},-2^\frac{m+1}{2})$ exactly $x_1,x_2,x_3,x_4$ times respectively. Then, by Lemma \ref{lem-S1} (1) and Lemma \ref{lem-S12} (1), one gets
	\begin{equation}\label{eq-xj4}
		\left\{\begin{array}{cl}
			x_1+x_2-x_3-x_4 & = 2^\frac{m+1}{2}(2^m-2), \\
			x_1-x_2+x_3-x_4 & = 2^\frac{m+1}{2}(2^m-2), \\
			x_1-x_2-x_3+x_4 & = 0.
		\end{array}\right.
	\end{equation}
	This together with $x_1+x_2+x_3+x_4 = 2^m(2^m-2)$ completes the proof.
\end{proof}

\begin{table} [ht]
	\caption{Value Distribution of $\big(S_+(u), S_+(u+1), S_+(u+2), S_+(u+3)\big)$} \label{tab-Si}
	\begin{center}
		\begin{tabular}{c c c c c}%l l l l l l
			\hline $S_+(u)$  &  $S_+(u+1)$  &  $S_+(u+2)$  &  $S_+(u+3)$  &  {\rm Frequency} \\ \hline
			$2^\frac{m+1}{2}$  &  $2^\frac{m+1}{2}$  &  $2^\frac{m+1}{2}$  &  $2^\frac{m+1}{2}$  &  $x_1$ \\ \hline
			$2^\frac{m+1}{2}$  &  $2^\frac{m+1}{2}$  &  $2^\frac{m+1}{2}$  &  $-2^\frac{m+1}{2}$  &  $x_2$ \\ \hline
			$2^\frac{m+1}{2}$  &  $2^\frac{m+1}{2}$  &  $-2^\frac{m+1}{2}$  &  $2^\frac{m+1}{2}$  &  $x_3$ \\ \hline
			$2^\frac{m+1}{2}$  &  $2^\frac{m+1}{2}$  &  $-2^\frac{m+1}{2}$  &  $-2^\frac{m+1}{2}$  &  $x_4$ \\ \hline
			$2^\frac{m+1}{2}$  &  $-2^\frac{m+1}{2}$  &  $2^\frac{m+1}{2}$  &  $2^\frac{m+1}{2}$  &  $x_5$ \\ \hline
			$2^\frac{m+1}{2}$  &  $-2^\frac{m+1}{2}$  &  $2^\frac{m+1}{2}$  &  $-2^\frac{m+1}{2}$  &  $x_6$ \\ \hline
			$2^\frac{m+1}{2}$  &  $-2^\frac{m+1}{2}$  &  $-2^\frac{m+1}{2}$  &  $2^\frac{m+1}{2}$  &  $x_7$ \\ \hline
			$2^\frac{m+1}{2}$  &  $-2^\frac{m+1}{2}$  &  $-2^\frac{m+1}{2}$  &  $-2^\frac{m+1}{2}$  &  $x_8$ \\ \hline
			$-2^\frac{m+1}{2}$  &  $2^\frac{m+1}{2}$  &  $2^\frac{m+1}{2}$  &  $2^\frac{m+1}{2}$  &  $x_9$ \\ \hline
			$-2^\frac{m+1}{2}$  &  $2^\frac{m+1}{2}$  &  $2^\frac{m+1}{2}$  &  $-2^\frac{m+1}{2}$  &  $x_{10}$ \\ \hline
			$-2^\frac{m+1}{2}$  &  $2^\frac{m+1}{2}$  &  $-2^\frac{m+1}{2}$  &  $2^\frac{m+1}{2}$  &  $x_{11}$ \\ \hline
			$-2^\frac{m+1}{2}$  &  $2^\frac{m+1}{2}$  &  $-2^\frac{m+1}{2}$  &  $-2^\frac{m+1}{2}$  &  $x_{12}$ \\ \hline
			$-2^\frac{m+1}{2}$  &  $-2^\frac{m+1}{2}$  &  $2^\frac{m+1}{2}$  &  $2^\frac{m+1}{2}$  &  $x_{13}$ \\ \hline
			$-2^\frac{m+1}{2}$  &  $-2^\frac{m+1}{2}$  &  $2^\frac{m+1}{2}$  &  $-2^\frac{m+1}{2}$  &  $x_{14}$ \\ \hline
			$-2^\frac{m+1}{2}$  &  $-2^\frac{m+1}{2}$  &  $-2^\frac{m+1}{2}$  &  $2^\frac{m+1}{2}$  &  $x_{15}$ \\ \hline
			$-2^\frac{m+1}{2}$  &  $-2^\frac{m+1}{2}$  &  $-2^\frac{m+1}{2}$  &  $-2^\frac{m+1}{2}$  &  $x_{16}$ \\ \hline
		\end{tabular}
	\end{center}
\end{table}
	
\begin{lem}\label{lem-S1S2S3S4}
	Let $m$ be odd and $u=a+2b\in\rr$ with $a\in\tei^\star$ and $b\in\tei$. Then the value distribution of $\big(S_+(u), S_+(u+1), S_+(u+2), S_+(u+3)\big)$ is given in Table \ref{tab-Si}, where
%		\begin{equation}\label{eq-x}\footnotesize
%			\begin{aligned}
%				& x_1 = \frac{1}{16}(4s_1+4s_5+4s_{11}+s_{15}+s_{16}), \\
%				& x_2 = x_3 = x_5 = x_9 = \frac{1}{16}(2s_1-2s_{11}-s_{15}+s_{16}), & \\
%				& x_4 = x_7 = x_{10} = x_{13} = \frac{1}{16}(s_{15}+s_{16}), \\
%				& x_6 = x_{11} = \frac{1}{16}(-4s_5+s_{15}+s_{16}), \\
%				& x_8 = x_{12} = x_{14} = x_{15} = \frac{1}{16}(-2s_1+2s_{11}-s_{15}+s_{16}), \\
%				& x_{16} = \frac{1}{16}(-4s_1+4s_5-4s_{11}+s_{15}+s_{16}).
%			\end{aligned}
%		\end{equation}
	\begin{eqnarray*}
		x_1 & = & 2^{m-3}(2^{m-1}+1+\sigma+4\tau)+2^\frac{m-1}{2}(2^{m-2}-1)(2+\tau), \\
		x_2 & = & x_3 = x_5 = x_9 = 2^{m-3}(2^{m-1}+2^\frac{m+1}{2}-1-\sigma-2\tau), \\
		x_4 & = & x_7 = x_{10} = x_{13} = 2^{m-3}(2^{m-1}-1+\sigma), \\
		x_6 & = & x_{11} = 2^{m-3}(2^{m-1}-3+\sigma-2^\frac{m+1}{2}\tau)+2^\frac{m-1}{2}\tau, \\
		x_8 & = & x_{12} = x_{14} = x_{15} = 2^{m-3}(2^{m-1}-2^\frac{m+1}{2}-1-\sigma+2\tau), \\
		x_{16} & = & 2^{m-3}(2^{m-1}+1+\sigma-4\tau)-2^\frac{m-1}{2}(2^{m-2}-1)(2-\tau).
	\end{eqnarray*}
\end{lem}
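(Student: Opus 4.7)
The plan is to use discrete Fourier inversion on the group $\{+1,-1\}^4$. Since $a\in\tei^\star$ means $a\ne 0,1$, we have $a\oplus 1\in\tei^\star$ as well, so the ``$a$-part'' of each of $u,u+1,u+2,u+3$ lies in $\tei^\star$; Lemma \ref{lem-S1-dis} then forces $S_+(u+j)\in\{\pm 2^{(m+1)/2}\}$ for every $j\in\{0,1,2,3\}$. Write $S_+(u+j)=2^{(m+1)/2}\epsilon_j(u)$ with $\epsilon_j(u)\in\{+1,-1\}$, and let $N(s_0,s_1,s_2,s_3)$ denote the desired frequency. Character orthogonality on $\ztwo^4$ gives
\begin{equation*}
  16\,N(s_0,s_1,s_2,s_3)=\sum_{T\subseteq\{0,1,2,3\}}\Bigl(\prod_{j\in T}s_j\Bigr)M_T,\qquad M_T:=2^{-|T|(m+1)/2}\sum_{a\in\tei^\star}\sum_{b\in\tei}\prod_{j\in T}S_+(u+j).
\end{equation*}

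The key preparatory observation is that the sum $\sum_{a\in\tei^\star,b\in\tei}f(u)$ is invariant under every shift $u\mapsto u+c$ with $c\in\zfour$. Writing $c=c_1+2c_2$ with $c_1,c_2\in\{0,1\}\subseteq\tei$ and using $x+y=(x\oplus y)+2\sqrt{xy}$ in $\rr$, a direct calculation gives $u+c=(a\oplus c_1)+2\bigl((b\oplus c_2)\oplus\sqrt{ac_1}\bigr)$; since $a\oplus 1\in\tei^\star\Leftrightarrow a\in\tei^\star$, the induced map $(a,b)\mapsto(a\oplus c_1,(b\oplus c_2)\oplus\sqrt{ac_1})$ is a bijection of $\tei^\star\times\tei$ onto itself. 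Consequently $M_T$ depends only on the cyclic shape of $T$, so the sixteen moments collapse to six distinct values: one each for $|T|=0,4$, a common value for all four singletons, a common value for the four ``adjacent'' pairs (difference $\pm1$), a common value for the two ``diagonal'' pairs (difference $2$), and a common value for all four triples. These six are read off Lemmas \ref{lem-S1}--\ref{lem-S1234}; in particular, $M_{\{0,2\}}=M_{\{1,3\}}=0$ by Lemma \ref{lem-S12}(1), while Lemma \ref{lem-S123}(1) supplies the common value of $M_{\{0,1,2\}}=M_{\{0,1,3\}}=M_{\{0,2,3\}}=M_{\{1,2,3\}}$.

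It remains to substitute the six moments into the Fourier inversion identity for each of the sixteen sign patterns. Because the two diagonal second moments vanish and $M_{\{j\}}, M_T$ with $|T|=3$ are the only odd-degree moments, the sixteen patterns partition exactly into the six equivalence classes in the statement: $x_1$ (all $+$) and $x_{16}$ (all $-$) are isolated; the four ``one-minus'' patterns $\{x_2,x_3,x_5,x_9\}$ yield a common sum $\sum s_j=2$, $\sum$ adjacent $= 0$, $\sum$ triples $=-2$, whence a common value; symmetrically $\{x_8,x_{12},x_{14},x_{15}\}$; the four ``two adjacent minus'' patterns $\{x_4,x_7,x_{10},x_{13}\}$ each contribute $0$ to every odd-degree term and $-2$ to the diagonal term, giving a common value; and the two ``diagonal two-minus'' patterns $\{x_6,x_{11}\}$ each contribute $-4$ to the adjacent term and $+2$ to the diagonal, giving another common value.

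The main obstacle I anticipate is purely arithmetic: rearranging each of these six expressions into the closed form displayed in the lemma, keeping careful track of the contributions of $\sigma$, $\tau$, and the various half-integer powers of $2$ (e.g.\ using $(2^m-4)2^{(m-5)/2}=2^{(m-1)/2}(2^{m-2}-1)$ to recognize the stated factorizations). No new exponential sum input is needed beyond Lemmas \ref{lem-S1}--\ref{lem-S1234}.
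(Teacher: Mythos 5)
Your proposal is correct and is essentially the paper's own argument: the Fourier-inversion identity on $\{\pm1\}^4$ is exactly the explicit solution of the $16\times16$ linear system the paper sets up, fed by the same moment values from Lemmas \ref{lem-S1}, \ref{lem-S12}(1)(2), \ref{lem-S123}(1) and \ref{lem-S1234}(1). Your explicit verification that $\sum_{a\in\tei^\star,b\in\tei}f(u)$ is shift-invariant (so that the sixteen moments collapse to six) is a point the paper uses only implicitly, but it does not change the method.
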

\begin{proof}
	When $a$ ranges over $\tei^\star$ and $b$ runs through $\tei$, according to Lemma \ref{lem-S1-dis}, assume that the value distribution of $\big(S_+(u), S_+(u+1), S_+(u+2), S_+(u+3)\big)$ is given in Table \ref{tab-Si}. Next we determine the values of $x_l$ for $l=1,2,\cdots,16$. Similar to \eqref{eq-xj4}, using Lemma \ref{lem-S1} (1), Lemma \ref{lem-S12} (1)(2), Lemma \ref{lem-S123} (1) and Lemma \ref{lem-S1234} (1), one can obtain
	$$
	{\footnotesize
		\setcounter{MaxMatrixCols}{24}
		  \left(\begin{matrix}
		        1 & 1 & 1 & 1 & 1 & 1 & 1 & 1 & -1 & -1 & -1 & -1 & -1 & -1 & -1 & -1 \\
				1 & 1 & 1 & 1 & -1 & -1 & -1 & -1 & 1 & 1 & 1 & 1 & -1 & -1 & -1 & -1 \\
				1 & 1 & -1 & -1 & 1 & 1 & -1 & -1 & 1 & 1 & -1 & -1 & 1 & 1 & -1 & -1 \\
				1 & -1 & 1 & -1 & 1 & -1 & 1 & -1 & 1 & -1 & 1 & -1 & 1 & -1 & 1 & -1 \\
				1 & 1 & 1 & 1 & -1 & -1 & -1 & -1 & -1 & -1 & -1 & -1 & 1 & 1 & 1 & 1 \\
				1 & 1 & -1 & -1 & 1 & 1 & -1 & -1 & -1 & -1 & 1 & 1 & -1 & -1 & 1 & 1 \\
				1 & -1 & 1 & -1 & 1 & -1 & 1 & -1 & -1 & 1 & -1 & 1 & -1 & 1 & -1 & 1 \\
				1 & 1 & -1 & -1 & -1 & -1 & 1 & 1 & 1 & 1 & -1 & -1 & -1 & -1 & 1 & 1 \\
				1 & -1 & 1 & -1 & -1 & 1 & -1 & 1 & 1 & -1 & 1 & -1 & -1 & 1 & -1 & 1 \\
				1 & -1 & -1 & 1 & 1 & -1 & -1 & 1 & 1 & -1 & -1 & 1 & 1 & -1 & -1 & 1 \\
				1 & 1 & -1 & -1 & -1 & -1 & 1 & 1 & -1 & -1 & 1 & 1 & 1 & 1 & -1 & -1 \\
				1 & -1 & 1 & -1 & -1 & 1 & -1 & 1 & -1 & 1 & -1 & 1 & 1 & -1 & 1 & -1 \\
				1 & -1 & -1 & 1 & 1 & -1 & -1 & 1 & -1 & 1 & 1 & -1 & -1 & 1 & 1 & -1 \\
				1 & -1 & -1 & 1 & -1 & 1 & 1 & -1 & 1 & -1 & -1 & 1 & -1 & 1 & 1 & -1 \\
				1 & -1 & -1 & 1 & -1 & 1 & 1 & -1 & -1 & 1 & 1 & -1 & 1 & -1 & -1 & 1 \\
				1 & 1 & 1 & 1 & 1 & 1 & 1 & 1 & 1 & 1 & 1 & 1 & 1 & 1 & 1 & 1
		  \end{matrix}\right)
		  \left(\begin{matrix}
				x_1\\ x_2\\ x_3\\ x_4\\ x_5\\ x_6\\ x_7\\ x_8\\ x_9\\ x_{10}\\ x_{11}\\ x_{12}\\
				x_{13}\\ x_{14}\\ x_{15}\\ x_{16}
			\end{matrix}\right)
			=\left(\begin{matrix}
				s_1\\ s_2\\ s_3\\ s_4\\ s_5\\ s_6\\ s_7\\ s_8\\ s_9\\ s_{10}\\ s_{11}\\ s_{12}\\
				s_{13}\\ s_{14}\\ s_{15}\\ s_{16}
			\end{matrix}\right)
		}
		$$
		where $s_{16}= 2^m(2^m-2)$ and
		\begin{eqnarray*}
			s_1 & = & s_2 = s_3 = s_4 = 2^{-\frac{(m+1)}{2}}\sum\nolimits_{a\in\tei^\star}\sum\nolimits_{b\in\tei}S_+(u), \\
			s_5 & = & s_7 = s_8 = s_{10} = 2^{-(m+1)}\sum\nolimits_{a\in\tei^\star}\sum\nolimits_{b\in\tei}S_+(u)S_+(u+1), \\
			s_6 & = & s_9 = 2^{-(m+1)}\sum\nolimits_{a\in\tei^\star}\sum\nolimits_{b\in\tei}S_+(u)S_+(u+2), \\
			s_{11} & = & s_{12} = s_{13} = s_{14} = 2^{-\frac{3(m+1)}{2}}\sum\nolimits_{a\in\tei^\star}\sum\nolimits_{b\in\tei}S_+(u)S_+(u+1)S_+(u+2), \\
			s_{15} & = & 2^{-2(m+1)}\sum\nolimits_{a\in\tei^\star}\sum\nolimits_{b\in\tei}S_+(u)S_+(u+1)S_+(u+2)S_+(u+3).
		\end{eqnarray*}
		Then the desired result follows from Lemmas \ref{lem-S1}-\ref{lem-S1234} by solving the above system of equations on variables $x_l$ for $l=1,2,\cdots,16$. This completes the proof.
	\end{proof}

	Drawing upon the aforementioned lemmas concerning $S_+(u)$ and $S_-(u)$, we are now equipped to present the detailed proofs of our main theorems.

\section{Proof of Theorem \ref{thm1}}\label{sec-proof1}

	According to the definitions, the length of $\cc_D$ in Theorem \ref{thm1} is $|D_t|$ and the Lee weight of a codeword $\bc(u)\in\cc_D$ is $\omega_L(\bc(u))=|D_t|-(N_0-N_2)$ for any $u\in\rr$, where $D = D_t = \{x\in\tei: \Tr_1^m(x) = t\}$ and
	\[N_j = | \{ x\in D_t: \Tr_1^m(ux) = j \} | = | \{ x\in \tei: \Tr_1^m(x) = t,\ \Tr_1^m(ux) = j\} |\]
	for $j=0,2$. Moreover, one obtains
	\begin{eqnarray}\label{eq-|D|}
		| D_t | & = & \frac{1}{4}\sum_{y\in\zfour}\sum_{x\in\tei} i^{y(\Tr_1^m(x)-t)} \nonumber \\
		& = & 2^{m-2} + \frac{1}{4}\sum_{x\in\tei} i^{\Tr_1^m(x)-t} + \frac{1}{4}\sum_{x\in\tei} i^{2\Tr_1^m(x)-2t} + \frac{1}{4}\sum_{x\in\tei} i^{3\left(\Tr_1^m(x)-t\right)} \nonumber \\
		& = & 2^{m-2} + \frac{1}{2} \mathfrak{Re}\big(\sum_{x\in\tei}i^{\Tr_1^m(x)-t}\big)
	\end{eqnarray}
	and
	\begin{eqnarray}\label{eq-N}
		% \nonumber to remove numbering (before each equation)
		N_0-N_2  &=& \frac{1}{4^2} \sum_{y\in\zfour}\sum_{z\in\zfour}\sum_{x\in\tei} i^{ y(\Tr_1^m(x)-t) }i^{ z\Tr_1^m(ux) } - \frac{1}{4^2} \sum_{y\in\zfour} \sum_{z\in\zfour} \sum_{x\in\tei} i^{ y\left(\Tr_1^m(x)-t\right) }i^{ z\left(\Tr_1^m(ux)-2\right) } \nonumber  \\
		&=& \frac{1}{4^2} \sum_{y\in\zfour} \sum_{x\in\tei} i^{y\left(\Tr_1^m(x)-t\right)} \big( 1 + i^{\Tr_1^m(ux)} + i^{2\Tr_1^m(ux)} + i^{3\Tr_1^m(ux)} \big) \nonumber\\ & &-
		\frac{1}{4^2} \sum_{y\in\zfour} \sum_{x\in\tei} i^{y\left(\Tr_1^m(x)-t\right)} \big( 1 - i^{\Tr_1^m(ux)} + i^{2\Tr_1^m(ux)} - i^{3\Tr_1^m(ux)} \big)  \nonumber\\
		&=& \frac{1}{8} \sum_{x\in\tei}i^{\Tr_1^m(ux)}\sum_{y\in\zfour}i^{y\left(\Tr_1^m(x)-t\right)} + \frac{1}{8}\sum_{x\in\tei}i^{3\Tr_1^m(ux)}\sum_{y\in\zfour}i^{ y\left(\Tr_1^m(x)-t\right) }.
	\end{eqnarray}
	
	Subsequently, we will proceed to provide the proof for $\cc_D$ when $t = 0$, and it should be noted that the proofs for other values of $t$ can be derived in a similar manner.
	
	If $t=0$, then by \eqref{eq-|D|}, \eqref{eq-N}, \eqref{eq-S} and Lemma \ref{lem-chi1-0}, one gets
	\begin{eqnarray}
		% \nonumber to remove numbering (before each equation)
		\label{eq-D0} |D_0| &=& 2^{m-2} + \frac{1}{2}\mathfrak{Re}\big(\sum_{x\in\tei}i^{\Tr_1^m(x)}\big) = 2^{m-2} + 2^{\frac{m-3}{2}}\tau, \\
		\label{eq-N0-N2}  N_0-N_2 &=& \frac{1}{8}\big(S_+(u) + S_+(u+1) + S_+(u+2) + S_+(u+3)\big).
	\end{eqnarray}
	
	Thus, the length of $\cc_D$ for $t=0$ is $2^{m-2} + 2^{\frac{m-3}{2}}\tau$ and the Lee weight $\omega_L(\bc(u))=|D_0|-(N_0-N_2)$, where $u = a+2b$ with $a,b\in\tei$, can be determined as follows:
	
	\textbf{Case 1}: $a,b \in\{0,1\}$, i.e., $u\in\{0,1,2,3\}$.
	
	For this case, by Lemma \ref{lem-chi1-0} and Lemma \ref{lem-S1-dis}, one has $S_+(0)=2^{m+1}$, $S_+(2)=0$, $S_+(1)=S_+(3)=2\re(\chi_{Q_1}(0))=2^{\frac{m+1}{2}}\tau$, and then $N_0-N_2=\frac{1}{8}\big(S_+(0)+S_+(1)+S_+(2)+S_+(3)\big)=2^{m-2}+2^{\frac{m-3}{2}}\tau$, which implies that $\omega_L(\bc(u))=0 $ when $u\in\{0,1,2,3\}$.

	\textbf{Case 2}: $a \in\{0,1\}$ and $b\in\tei^\star$.
	
	If this case happens, one then obtains $N_0-N_2 = \frac{1}{8}\big(S_+(2b)+S_+(2b+1)+ S_+(2b+2)+S_+(2b+3)\big)$. By Lemma \ref{lem-S1-dis} and \eqref{eq-S}, one further has $S_+(2b) = S_+(2b+2) = 0$ due to $b\ne0,1$ and $S_+(2b+1)=S_+(2b+3)= 2\mathfrak{Re}\big(\sum_{x\in\tei} i^{\Tr_1^m(x)}(-1)^{\tr_1^m(bx)}\big)=2\re(\chi_{Q_1}(b))$ which together with Lemma \ref{lem-chi1-0} leads to
	$$N_0-N_2 =\frac{1}{2}\re(\chi_{Q_1}(b))=\frac{1}{2}\re(i^{-\Tr_1^m(b)}\cdot2^{\frac{m-1}{2}}\tau(1+i^m)).$$
	Consequently, one gets
	\[
	N_0-N_2= \left\{ \begin{array}{cl}
		2^\frac{m-3}{2}\tau, & {\rm if}\;\Tr_1^m(b)=0, \\
		2^\frac{m-3}{2}i^{m-1}\tau, & {\rm if}\;\Tr_1^m(b)=1, \\
		-2^\frac{m-3}{2}\tau, & {\rm if}\;\Tr_1^m(b)=2, \\
		2^\frac{m-3}{2}i^{m+1}\tau, & {\rm if}\;\Tr_1^m(b)=3.
	\end{array}\right.
	\]
	Note that $0\in D_0$ and $1\in D_{\rm{m\, mod}\, 4}\in\{D_1,D_3\}$ since $m$ is odd. Thus, when $b$ ranges over $\tei^\star$, the value distribution of $(N_0-N_2)$ is
	\[
	N_0-N_2= \left\{ \begin{array}{cl}
		2^\frac{m-3}{2}\tau, & |D_0|-1\; {\rm times,}  \\
		2^\frac{m-3}{2}i^{m-1}\tau, & |D_1|-\lambda \; {\rm times,} \\
		-2^\frac{m-3}{2}\tau, & |D_2| \; {\rm times,}\\
		2^\frac{m-3}{2}i^{m+1}\tau, & |D_3|-(1-\lambda) \; {\rm times,}
	\end{array}\right.
	\]
	where $\lambda=1$ if $m\equiv 1\, {\rm mod}\, 4$ and $\lambda=0$ if $m\equiv 3\, {\rm mod}\, 4$. Recall that $\tau$ is defined as in \eqref{eq-tau-sig}. Hence, the value distribution of $(N_0-N_2)$ can be written as
	\[
	N_0-N_2=\left\{ \begin{array}{cl}
		2^\frac{m-3}{2}\tau, & |D_0|+|D_1|-2\; {\rm times,}  \\
		-2^{\frac{m-3}{2}}\tau, & |D_2|+|D_3| \; {\rm times}
	\end{array}\right.
	\]
	if $m\equiv 1\, {\rm mod}\, 4$, and for $m\equiv 3\, {\rm mod}\, 4$, the distribution is
	\[
	N_0-N_2=\left\{ \begin{array}{cl}
		2^\frac{m-3}{2}\tau, & |D_0|+|D_3|-2\; {\rm times,}  \\
		-2^{\frac{m-3}{2}}\tau, & |D_1|+|D_2| \; {\rm times.}
	\end{array}\right.
	\]
	According to Lemma \ref{lem-chi1-0} and \eqref{eq-|D|}, one can obtain that $|D_0|=2^{m-2} + 2^{\frac{m-3}{2}}\tau$, $|D_1|=2^{m-2}+2^{\frac{m-3}{2}}i^{m-1}\tau$, $|D_2|=2^{m-2}-2^{\frac{m-3}{2}}\tau$ and $|D_3|=2^{m-2}- 2^{\frac{m-3}{2}}i^{m-1}\tau$. Thus, when $a$ ranges over $\{0,1\}$ and $b$ runs through $\tei^\star$, for odd $m$, the Lee weight $\omega_L(\bc(u))=|D_0|-(N_0-N_2)$ has the following distribution
	\[
	\omega_L(\bc(u))=\left\{ \begin{array}{cl}
		2^{m-2}, & 2^m+2^{\frac{m+1}{2}}\tau-4\; {\rm times,}  \\
		2^{m-2} + 2^{\frac{m-1}{2}}\tau, & 2^m-2^{\frac{m+1}{2}}\tau \; {\rm times.}
	\end{array}\right.
	\]

	\textbf{Case 3}: $a\in\tei^\star$ and $b\in\tei$.
	
	In this case, by \eqref{eq-D0}, \eqref{eq-N0-N2}, Table \ref{tab-Si} and Lemma \ref{lem-S1S2S3S4}, one can conclude that $\omega_L(\bc(u))=|D_0|-(N_0-N_2)$ has the following distribution
	\[
	\omega_L(\bc(u))=\left\{ \begin{array}{cl}
		2^{m-2}+2^\frac{m-3}{2}\tau, & x_4+x_6+x_7+x_{10}+x_{11}+x_{13} \; {\rm times,}  \\
		2^{m-2}+2^{\frac{m-3}{2}}(\tau-1), & x_2+x_3+x_5+x_9  \; {\rm times,}\\
		2^{m-2}+2^{\frac{m-3}{2}}(\tau+1), & x_8+x_{12}+x_{14}+x_{15} \; {\rm times,}  \\
		2^{m-2}+2^{\frac{m-3}{2}}(\tau-2), & x_1 \; {\rm times,}  \\
		2^{m-2}+2^{\frac{m-3}{2}}(\tau+2), & x_{16} \; {\rm times}
	\end{array}\right.
	\]
	when $a$ ranges over $\tei^\star$ and $b$ runs through $\tei$, where $x_l$'s are given as in Lemma \ref{lem-S1S2S3S4}.
	
	Based on Cases 1-3, it can be observed that $\omega_L(\bc(u))=0$ if $u\in\zfour$ and $\omega_L(\bc(u))>0$ otherwise. This implies that each codeword in $\cc_D$ appears exactly four times, specifically $\bc(u)=\bc(u+1)=\bc(u+2)=\bc(u+3)$, since $\cc_D$ is a linear code over $\zfour$. Consequently, the number of distinct codewords in $\cc_D$ is $4^{m-1}$. Therefore, the Lee weight distribution of $\cc_D$ for $t=0$ can be derived from Cases 1-3, with the frequency of each weight being divided by $4$.

	The proof for $\cc_D$ when $t=2$ follows a similar approach to the case when $t=0$. For $t\in \{1,3\}$, we can likewise determine the parameters and Lee weight distributions of $\cc_D$ by analyzing the distribution of $\big(S_+(u), iS_-(u+1), S_+(u+2), iS_-(u+3)\big)$, using Lemma \ref{lem-S1}, Lemma \ref{lem-S12} (1)(3), Lemma \ref{lem-S123} (2), Lemma \ref{lem-S1234} (2), as we did in Lemma \ref{lem-S1S2S3S4} for $\big(S_+(u), S_+(u+1), S_+(u+2), S_+(u+3)\big)$. To avoid redundancy, we have omitted the detailed proofs for other values of $t$.
	
	This completes the proof.

\section{Proof of Theorem \ref{thm2}}\label{sec-proof2}
	
	It can be readily verified from the definitions that the length of \(\cc_D\) in Theorem \ref{thm2} is $|D| = |D_{t_1}|+|D_{t_2}|$ and the Lee weight of a codeword $\bc(u)\in\cc_D$ is $\omega_L(\bc(u))=|D|-(N_0-N_2)$ for any $u\in\rr$, where $D = D_{t_1} \cup D_{t_2} = \{x\in\tei: \Tr_1^m(x) = t_1 {\rm \ or}\ \Tr_1^m(x) = t_2\}$ and
	$$N_j = | \{ x\in D: \Tr_1^m(ux) = j \} | = | \{ x\in \tei: \Tr_1^m(x) = t_1 {\rm \ or}\ \Tr_1^m(x) = t_2,\ \Tr_1^m(ux) = j\} |$$
	for $j=0,2$. We will now proceed to demonstrate the proof for $\cc_D$ when $(t_1, t_2) = (0, 2)$, and it is worth mentioning that the proof for $(t_1, t_2) = (1, 3)$ can be conducted analogously.
	If $(t_1, t_2) = (0, 2)$, then by \eqref{eq-|D|}, \eqref{eq-N}, \eqref{eq-S}, and Lemma \ref{lem-chi1-0}, we obtain
	\begin{eqnarray}
		% \nonumber to remove numbering (before each equation)
		|D| &=& | D_0 | + | D_2 | = 2^{m-1}, \nonumber \\
		\label{eq2-N0-N2} N_0-N_2 & = & \frac{1}{8} \big(S_+(u) + S_+(u+1) + S_+(u+2) + S_+(u+3)\big) \nonumber \\
		&&+ \frac{1}{8} \big(S_+(u) - S_+(u+1) + S_+(u+2) - S_+(u+3)\big) \nonumber \\
		& = & \frac{1}{4}\big(S_+(u)+S_+(u+2)\big).
	\end{eqnarray}
	
	Let $u=a+2b$ with $a,b\in\tei$. Then the Lee weight $\omega_L(\bc(u))=2^{m-1}-(N_0-N_2)$ can be determined as follows:
	
	\textbf{Case 1}: $a = 0$ and $b\in\{0, 1\}$, i.e. $u\in\{0, 2\}$.
	
	From Lemma \ref{lem-S1-dis}, we have $S_+(0) = 2^{m+1}$ and $S_+(2) = 0 $. Consequently, $N_0 - N_2 = \frac{1}{4}\big(S_+(0) + S_+(2)\big) = 2^{m-1}$, which indicates that $\omega_L(\bc(u)) = 0$ when $u \in \{0, 2\}$.
	
	\textbf{Case 2}: $a = 0$ and $b\in\tei^\star$.
	
	In this case, we have $N_0 - N_2 = \frac{1}{4}\big(S_+(2b) + S_+(2b+2)\big)$. According to Lemma \ref{lem-S1-dis}, for any $b \in \tei^\star$, $S_+(2b) = S_+(2b+2) = 0$. As a result, $N_0 - N_2 = 0 $ and $\omega_L(\bc(u)) = 2^{m-1} $ for any $u = 2b$ with $b \in \tei^\star $.

	\textbf{Case 3}: $a = 1$ and $b\in\tei$.
	
	By \eqref{eq2-N0-N2}, for this case, we have $N_0-N_2 = \frac{1}{4}\big(S_+(2b+1)+S_+(2b+3)\big)$. According to \eqref{eq-S}, we obtain $S_+(2b+1)=S_+(2b+3)= 2\mathfrak{Re}\big(\sum_{x\in\tei} i^{\Tr_1^m(x)}(-1)^{\tr_1^m(bx)}\big)=2\re(\chi_{Q_1}(b))$, which, in conjunction with Lemma \ref{lem-chi1-0}, yields
	$N_0-N_2 = \re(\chi_{Q_1}(b)) = \re(i^{-\Tr_1^m(b)}\cdot2^{\frac{m-1}{2}}\tau(1+i^m))$.
	This implies that
	\[
	N_0-N_2= \left\{ \begin{array}{cl}
		2^\frac{m-1}{2}\tau, & |D_0|\; {\rm times,}  \\
		2^\frac{m-1}{2}i^{m-1}\tau, & |D_1|\; {\rm times,} \\
		-2^\frac{m-1}{2}\tau, & |D_2|\; {\rm times,}\\
		2^\frac{m-1}{2}i^{m+1}\tau, & |D_3|\; {\rm times}
	\end{array}\right.
	\]
	when $b$ ranges over $\tei$. By the definition of $\tau$ in \eqref{eq-tau-sig}, one has when $b$ ranges over $\tei$, the value distribution of $(N_0-N_2)$ is
	\[
	N_0-N_2=\left\{ \begin{array}{cl}
		2^\frac{m-1}{2}\tau, & |D_0|+|D_1|\; {\rm times,}  \\
		-2^{\frac{m-1}{2}}\tau, & |D_2|+|D_3| \; {\rm times}
	\end{array}\right.
	\]
	if $m\equiv 1\, {\rm mod}\, 4$, and for $m\equiv 3\, {\rm mod}\, 4$, the distribution is
	\[
	N_0-N_2=\left\{ \begin{array}{cl}
		2^\frac{m-1}{2}\tau, & |D_0|+|D_3|\; {\rm times,}  \\
		-2^{\frac{m-1}{2}}\tau, & |D_1|+|D_2| \; {\rm times.}
	\end{array}\right.
	\]
	Additionally, the values of $|D_t|$ for $1\le t \le 4$ can be derived from \eqref{eq-|D|}. Consequently, we find that $N_0-N_2 = \pm2^\frac{m-1}{2}\tau$, which occurs $2^{m-1} \pm 2^\frac{m-1}{2}\tau$ times respectively. When $a=1$ and $b$ runs through $\tei$ for odd $m$, the Lee weight $\omega_L(\bc(u))=2^{m-1}-(N_0-N_2)$ thus has the following distribution
	$$ \omega_L(\bc(u)) = \left\{ \begin{array}{cl}
		2^{m-1}-2^\frac{m-1}{2}\tau, & 2^{m-1} + 2^\frac{m-1}{2}\tau\ {\rm times,} \\
		2^{m-1}+2^\frac{m-1}{2}\tau, & 2^{m-1} - 2^\frac{m-1}{2}\tau\ {\rm times.}
	\end{array}\right.$$	
	
	\textbf{Case 4}: $a\in\tei^\star$ and $b\in\tei$.
	
	In this case, leveraging \eqref{eq2-N0-N2} and Lemma \ref{lem-S1S2}, it can be deduced that $\omega_L(\bc(u)) = 2^{m-1}-(N_0-N_2)$ exhibits the following distribution
	\[
	\omega_L(\bc(u))=\left\{ \begin{array}{cl}
		2^{m-1}, & 2^{2m-1}-2^m \; {\rm times,}  \\
		2^{m-1}-2^{\frac{m-1}{2}}, & (2^m-2)(2^{m-2}+2^\frac{m-1}{2}) \; {\rm times,}\\
		2^{m-1}+2^{\frac{m-1}{2}}, & (2^m-2)(2^{m-2}-2^\frac{m-1}{2}) \; {\rm times}
	\end{array}\right.
	\]
	when $a$ ranges over $\tei^\star$ and $b$ runs through $\tei$.
	
	As observed from the previous discussion, $\omega_L(\bc(u)) > 0$ holds for any $u \in \tei \setminus \{0, 2\}$. This suggests that each codeword in $\cc_D$ appears exactly twice, specifically $\bc(u) = \bc(u+2)$, since $\cc_D$ is a linear code over $\mathbb{Z}_4$. Hence, the number of distinct codewords in $\cc_D$ is $2^{2m-1}$. Accordingly, the Lee weight distribution of $\cc_D$ for $(t_1, t_2) = (0, 2)$ can be deduced from Cases 1-4, with the frequency of each weight being halved.
	
	This completes the proof.

\section{Proof of Theorem \ref{thm3}}\label{sec-proof3}
	
	It is straightforward to confirm, using the definitions, that the length of \(\cc_D\) in Theorem \ref{thm3} is $|D| = 2^m - |D_t|$, and the Lee weight of a codeword $\bc(u)\in\cc_D$ is $\omega_L(\bc(u)) = 2^m - |D_t| -(N_0-N_2)$ for any $u\in\rr$, where $D = \tei\backslash D_t = \{x\in\tei: \Tr_1^m(x) \ne t\}$ and
	$$N_j = | \{ x\in D: \Tr_1^m(ux) = j \} | = | \{ x\in \tei: \Tr_1^m(x) \ne t,\ \Tr_1^m(ux) = j\} |$$
	for $j=0,2$. We will now present the proof for $\cc_D$ when $t = 0$,  and the proofs for other values of $t$ can be obtained in a similar manner.
	
	If $t = 0$, utilizing \eqref{eq-S}, \eqref{eq-|D|}, and \eqref{eq-N0-N2}, we arrive at
%	\begin{align}
%		\label{eq3-D0} |D| & = 2^m-|D_0| = 3\cdot2^{m-2} - 2^\frac{m-3}{2}\tau, \\
%		\begin{split}
%			N_0-N_2 & = \frac{1}{8}\sum_{j=1}^{3}\big( \sum_{x\in\tei}i^{\Tr_1^m(ux)}\sum_{y\in\zfour}i^{y\left(\Tr_1^m(x)-j\right)} + \sum_{x\in\tei}i^{3\Tr_1^m(ux)}\sum_{y\in\zfour}i^{ y\left(\Tr_1^m(x)-j\right) } \big) \\
%			& = \label{eq3-N0-N2} \frac{1}{8}\big( 3S_+(u) - S_+(u+1) - S_+(u+2) - S_+(u+3) \big).
%		\end{split}
%	\end{align}
	\begin{eqnarray}
		% \nonumber to remove numbering (before each equation)
		\label{eq3-D0} |D| &=& 2^m-|D_0| = 3\cdot2^{m-2} - 2^\frac{m-3}{2}\tau, \\
		\label{eq3-N0-N2} N_0-N_2 & = & \frac{1}{8}\sum_{j=1}^{3}\big( \sum_{x\in\tei}i^{\Tr_1^m(ux)}\sum_{y\in\zfour}i^{y\left(\Tr_1^m(x)-j\right)} + \sum_{x\in\tei}i^{3\Tr_1^m(ux)}\sum_{y\in\zfour}i^{ y\left(\Tr_1^m(x)-j\right) } \big) \nonumber \\
		& = & \frac{1}{8}\big( 3S_+(u) - S_+(u+1) - S_+(u+2) - S_+(u+3) \big).
	\end{eqnarray}
	
	Therefore, the length of $\cc_D$ for $t = 0$ is $|D| = 3\cdot2^{m-2} - 2^\frac{m-3}{2}\tau$, and the Lee weight $\omega_L(\bc(u)) = |D|-(N_0-N_2)$, where $u = a+2b$ with $a,b\in\tei$, can be computed as follows:
	
	\textbf{Case 1}: $a = 0$ and $b=0$, i.e., $u=0$.
	
	This is a trivial case where we have $\omega_L(\bc(0))=0$.
	
	\textbf{Case 2}: $a = 0$ and $b=1$, i.e., $u=2$.
	
	Based on Lemmas \ref{lem-chi1-0}-\ref{lem-S1-dis}, we obtain $S_+(0) = 2^{m+1}$, $S_+(2) = 0$, and $ S_+(1) = S_+(3) = 2\re(\chi_{Q_1}(0)) = 2^\frac{m+1}{2}\tau$.
	Consequently, by \eqref{eq3-N0-N2}, we have $N_0-N_2 = \frac{1}{8}\big( 3S_+(2) - S_+(3) - S_+(0) - S_+(1) \big) = -2^{m-2} - 2^\frac{m-3}{2}\tau$, which implies that
	$\omega_L(\bc(2)) = 2^m $.
	
	\textbf{Case 3}: $a = 0$ and $b\in\tei^\star$.
	
	In this case, by \eqref{eq3-N0-N2}, we have $N_0-N_2 = \frac{1}{8}\big(3S_+(2b)-S_+(2b+1)-S_+(2b+2)-S_+(2b+3)\big)$.  According to Lemma \ref{lem-S1-dis} and \eqref{eq-S}, we further obtain $ S_+(2b) = S_+(2b+2) = 0$ since $b\ne0,1$, and $S_+(2b+1) = S_+(2b+3) = 2\re(\chi_{Q_1}(b))$, which leads to
	$$N_0-N_2 = -\frac{1}{2}\re(\chi_{Q_1}(b)) = -\frac{1}{2}\re(i^{-\Tr_1^m(b)}\cdot2^{\frac{m-1}{2}}\tau(1+i^m)).$$
	Analogous to Case 2 in the proof of Theorem \ref{thm1}, it is found that as	$b$ ranges over $\tei^\star$, the value distribution of $(N_0-N_2)$ is
	\[
	N_0-N_2=
	\left\{ \begin{array}{cl}
		-2^\frac{m-3}{2}\tau, & 2^{m-1}+2^\frac{m-1}{2}\tau-2\; {\rm times,}  \\
		2^{\frac{m-3}{2}}\tau, & 2^{m-1}-2^\frac{m-1}{2}\tau \; {\rm times.}
	\end{array}\right.
	\]
	Therefore, when $a=0$ and $b$ runs through $\tei^\star$, for odd $m$, the Lee weight $\omega_L(\bc(u))=|D|-(N_0-N_2)$ exhibits the following distribution
	\[
	\omega_L(\bc(u))=\left\{ \begin{array}{cl}
		3\cdot2^{m-2}, & 2^{m-1}+2^\frac{m-1}{2}\tau-2\; {\rm times,}  \\
		3\cdot2^{m-2} - 2^\frac{m-1}{2}\tau, & 2^{m-1}-2^\frac{m-1}{2}\tau \; {\rm times.}
	\end{array}\right.
	\]
	
	\textbf{Case 4}: $a = 1$ and $b\in\{0,1\}$, i.e., $u\in\{1, 3\}$.
	
	By Lemmas \ref{lem-chi1-0}-\ref{lem-S1-dis}, we obtain $S_+(0) = 2^{m+1}$, $S_+(2) = 0$, and $ S_+(1) = S_+(3) = 2\re(\chi_{Q_1}(0)) = 2^\frac{m+1}{2}\tau$.
	Therefore, by \eqref{eq3-N0-N2}, $N_0-N_2 = \frac{1}{8}\big(3S_+(1) - S_+(2) - S_+(3) - S_+(0)\big) = 2^\frac{m-3}{2}\tau - 2^{m-2}$, which implies that
	$\omega_L(\bc(u)) = |D| - (N_0-N_2) = 2^m - 2^\frac{m-1}{2}\tau$ when $u\in\{1,3\}$.
	
	\textbf{Case 5}: $a = 1$ and $b\in\tei^\star$.
	
	In this case, by \eqref{eq3-N0-N2}, we have $N_0-N_2 = \frac{1}{8}\big(3S_+(2b+1)-S_+(2b+2)-S_+(2b+3)-S_+(2b)\big)$. According to  Lemma \ref{lem-S1-dis} and \eqref{eq-S}, we obtain  $S_+(2b+1) = S_+(2b+3) = 2\re(\chi_{Q_1}(b))$ and $ S_+(2b) = S_+(2b+2) = 0$ since $b\ne0,1$, which results in $N_0-N_2 = \frac{1}{2}\re(\chi_{Q_1}(b)).$  Similar to Case 3, we find that when $a=1$, $b$ runs through $\tei^\star$, and $m$ is odd, the distribution of $N_0-N_2$ is
	\[
	N_0-N_2=
	\left\{ \begin{array}{cl}
		2^\frac{m-3}{2}\tau, & 2^{m-1}+2^\frac{m-1}{2}\tau-2\; {\rm times,}  \\
		-2^{\frac{m-3}{2}}\tau, & 2^{m-1}-2^\frac{m-1}{2}\tau \; {\rm times,}
	\end{array}\right.
	\]
	and consequently, the distribution of Lee weights $\omega_L(\bc(u))=|D|-(N_0-N_2)$ is
	\[
	\omega_L(\bc(u))=\left\{ \begin{array}{cl}
		3\cdot2^{m-2}, & 2^{m-1}-2^\frac{m-1}{2}\tau \; {\rm times,}  \\
		3\cdot2^{m-2} - 2^\frac{m-1}{2}\tau, & 2^{m-1}+2^\frac{m-1}{2}\tau-2 \; {\rm times.}
	\end{array}\right.
	\]
	
	\textbf{Case 6}: $a\in\tei^\star$ and $b\in\tei$.
	
	In this case, by \eqref{eq3-D0}, \eqref{eq3-N0-N2}, and Lemma \ref{lem-S1S2S3S4}, we can deduce that the distribution of $\omega_L(\bc(u))=|D|-(N_0-N_2)$ is
	\[
	\omega_L(\bc(u))=\left\{ \begin{array}{cl}
		3\cdot2^{m-2} - 2^\frac{m-3}{2}\tau, & x_1+x_{16} \; {\rm times,}  \\
		3\cdot2^{m-2}-2^\frac{m-3}{2}(\tau+1), & x_2+x_3+x_5  \; {\rm times,}\\
		3\cdot2^{m-2}-2^\frac{m-3}{2}(\tau-1), & x_{12}+x_{14}+x_{15} \; {\rm times,}  \\
		3\cdot2^{m-2}-2^\frac{m-3}{2}(\tau+2), & x_4+x_6+x_7 \; {\rm times,}  \\
		3\cdot2^{m-2}-2^\frac{m-3}{2}(\tau-2), & x_{10}+x_{11}+x_{13} \; {\rm times,}  \\
		3\cdot2^{m-2}-2^\frac{m-3}{2}(\tau+3), & x_8 \; {\rm times,}  \\
		3\cdot2^{m-2}-2^\frac{m-3}{2}(\tau-3), & x_9 \; {\rm times}
	\end{array}\right.
	\]
	when $a$ ranges over $\tei^\star$ and $b$ runs through $\tei$, where $x_l$'s are given as in Lemma \ref{lem-S1S2S3S4}.
	
	Based on Cases 1-6, it is clear that $\omega_L(\bc(u))>0$ for any nonzero element $u\in\rr$. Therefore, the number of distinct codewords in $\cc_D$ is $4^m$, and the Lee weight distribution of $\cc_D$ for $t=0$ can be derived from the aforementioned cases.
	
	The proof for $\cc_D$ when $t=2$ follows a similar approach to the proof for the case when $t=0$. For the case where $t\in \{1,3\}$, we can similarly determine the parameters and Lee weight distributions of $\cc_D$ by analyzing the distribution of $\big(S_+(u), iS_-(u+1), S_+(u+2), iS_-(u+3)\big)$ with the aid of Lemma \ref{lem-S1}, Lemma \ref{lem-S12} (1)(3), Lemma \ref{lem-S123} (2), and Lemma \ref{lem-S1234} (2). To avoid redundancy, we will not provide the detailed proofs for other values of $t$.
	
	This completes the proof.

\section{Conclusion}\label{sec-conclusion}
	
	In this paper, we have explored several classes of few-weight $\mathbb{Z}_4$-linear codes in the form \eqref{eq-code}, by selecting $D = D_{t_1}$, $D = D_{t_1} \cup D_{t_2}$, and $D = D_{t_1} \cup D_{t_2} \cup D_{t_3}$ with $t_1, t_2, t_3 \in \mathbb{Z}_4$ as the defining sets, where $D_t$ is defined as in \eqref{eq-Dt} for $t \in \mathbb{Z}_4$. We initially employed the properties of the trace function over $\mathbb{Z}_4$ to examine the value distributions of two classes of exponential sums and their related properties. Building on these findings, we subsequently analyzed the parameters and Lee weight distributions of the $\mathbb{Z}_4$-linear codes we constructed for odd $m$. Furthermore, by comparing with the existing database of $\mathbb{Z}_4$ codes, examples with small $m$ provided in Table \ref{tab-newcode} demonstrate that many new $\mathbb{Z}_4$-linear codes and $\mathbb{Z}_4$-linear codes with the best known minimum Lee distance were identified from our codes. As future work, we cordially invite the reader to determine the Lee weight distribution of $\cc_D$ for even $m$.

\section*{Acknowledgements}
This work  was supported by the National Natural Science Foundation of China under Grant 12471492,  and by the Innovation Group Project of the Natural Science Foundation of Hubei Province of China under Grant 2023AFA021.

\appendices
\section {Proof of the Lemma \ref{lem-S123}}\label{append-a}	
	
	This section is devoted to proving Lemma \ref{lem-S123}. We present the proof for Case (1) here, as the other case can be proven using the same approach. For simplicity, define
	\[\Omega= \sum_{b\in \tei}i^{\Tr_1^m\left(2b(x\oplus y\oplus z)\right)}\]
	and
	$$\begin{aligned}
		& \rho_1 = \sum_{x,y,z\in\tei}i^{\Tr_1^m(2y+z)}\sum_{a\in \tei^\star}i^{\Tr_1^m\left(a(x+y+z)\right)}\Omega, \! &
		& \rho_2 = \sum_{x,y,z\in\tei}i^{\Tr_1^m(2y+3z)}\sum_{a\in \tei^\star}i^{\Tr_1^m\left(a(x+y+3z)\right)}\Omega, && \\
		& \rho_3 = \sum_{x,y,z\in\tei}i^{\Tr_1^m(2y+z)}\sum_{a\in \tei^\star}i^{\Tr_1^m\left(a(x+3y+z)\right)}\Omega,\! &
		& \rho_4 = \sum_{x,y,z\in\tei}i^{\Tr_1^m(2y+3z)}\sum_{a\in \tei^\star}i^{\Tr_1^m\left(a(x+3y+3z)\right)}\Omega. && \\
	\end{aligned}$$
	Then, by \eqref{eq-S}, substituting $ u = a+2b $, a straightforward computation gives
	\begin{eqnarray}\label{eq-S123}
		% \nonumber to remove numbering (before each equation)
		\sum_{a\in\tei^\star}\sum_{b\in\tei}S_+(u)S_+(u+2)S_+(u+1) =\rho_1+\rho_2+\rho_3+\rho_4+\overline{\rho_1}+\overline{\rho_2}+\overline{\rho_3}+\overline{\rho_4},
	\end{eqnarray}
	where $\overline{\rho_j}$ denotes the conjugate of $\rho_j$.
	
	Note that $\Omega= \sum_{b\in \tei}(-1)^{\tr_1^m\left(b(x\oplus y\oplus z)\right)} = 2^m$ if $z = x\oplus y$ and $0$ otherwise. This gives
	$$ \rho_1 = 2^m\sum_{x,y\in\tei}i^{\Tr_1^m(x+3y+2\sqrt{xy})}\sum_{a\in \tei^\star}(-1)^{\tr_1^m(a(x\oplus y\oplus \sqrt{xy}))}. $$
	Then, by the identity $(\sqrt{x}\oplus \sqrt{y})(x\oplus y\oplus \sqrt{xy})=\sqrt{x}^3\oplus \sqrt{y}^3$, we can conclude that $x\oplus y\oplus \sqrt{xy}=0$ if and only if $\sqrt{x}=\xi\sqrt{y}$, where $\xi$ belongs to some extension field of $\tei$, $\xi\ne 1$, and $\xi^3=1$. This implies that  $x\oplus y\oplus \sqrt{xy}=0$ if and only if $x=y=0$ since $m$ is odd. Hence, we have
	$$\sum_{a\in \tei^\star}i^{\tr_1^m(a(x\oplus y\oplus \sqrt{xy}))} = \left\{\begin{array}{cl}
		2^m-2,\ & {\rm if\ } x = y = 0, \\
		-1-(-1)^{\tr_1^m(x\oplus y\oplus \sqrt{xy})},\ & {\rm otherwise.}
	\end{array}\right.$$
	Moreover, using $2\tr_1^m(x)=2\Tr_1^m(x)$, $\Tr_1^m(x^{2^k}) = \Tr_1^m(x)$, and $\Tr_1^m(x+y)=\Tr_1^m(x)+\Tr_1^m(y)$ for a positive $k$ and $x,y\in\tei$, we obtain
	\begin{eqnarray*}
		\rho_1 & = & 2^m(2^m-2) - 2^m\sum_{x,y\in\tei,(x,y)\ne(0,0)}i^{\Tr_1^m(x+3y+2\sqrt{xy})}\big(1+i^{2\Tr_1^m(x\oplus y\oplus \sqrt{xy})}\big)
		\\ & = & 2^{2m} - 2^m\sum_{x,y\in\tei}i^{\Tr_1^m(x+3y+2xy)}\big(1+i^{2\Tr_1^m(x+y+xy)}\big)
		\\ & = & 2^{2m} - 2^m\sum_{x,y\in\tei}i^{\Tr_1^m(x+3y+2xy)} - 2^m\sum_{x\in\tei}i^{\Tr_1^m(3x)}\sum_{y\in\tei}i^{\Tr_1^m(y)}
		\\ & =& 2^{2m} - 2^m\sum_{x\in\tei}i^{\Tr_1^m(x)} \sum_{y\in\tei}i^{\Tr_1^m\left(y+2(1\oplus x)y\right)}- 2^m\sum_{x\in\tei}i^{\Tr_1^m(3x)}\sum_{y\in\tei}i^{\Tr_1^m(y)}
		\\ & =& 2^{2m} - 2^m\sum_{x\in\tei}i^{\Tr_1^m(x)}i^{-\Tr_1^m(1\oplus x)}\chi_{Q_1}(0)-2^m i^{-\Tr_1^m(1)}\chi_{Q_1}(0)^2
		\\ & =& 2^{2m} - 2^mi^{\Tr_1^m(3)}\chi_{Q_1}(0)\sum_{x\in\tei}(-1)^{\tr_1^m(\sqrt{x})}-2^{2m}
		\\ & =& 0,
	\end{eqnarray*}
	where the third-to-last equality follows from Lemma \ref{lem-chi1-0}, and the second-to-last equality follows from the fact that $m$ is odd.
	
	In a similar manner, we can obtain that
	$\rho_2 = 2^{2m+1}(2^\frac{m-1}{2}\tau-1)$, and $\rho_3=\rho_4=2^{2m}(2^\frac{m-1}{2}\tau-1)(1-i^m)$. Substituting the values of $\rho_j$ for $1\le j \le 4$ into \eqref{eq-S123} yields
	$$\sum_{a\in\tei^\star}\sum_{b\in\tei} S_+(u)S_+(u+2)S_+(u+1) = 2\rho_2+4\re(\rho_3) = 2^{2m+3}(2^\frac{m-1}{2}\tau-1),$$
	which completes the proof.

\section {Proof of the Lemma \ref{lem-S1234}}\label{append-b}	
	
	This section is dedicated to proving Lemma \ref{lem-S1234}.	For brevity, we present the proof for Case (1) only, as the remaining case can be demonstrated using a similar approach. For the sake of simplicity, define
	\[\Delta= \sum_{b\in\tei}i^{\Tr_1^m\big(2b(x\oplus y\oplus z\oplus v)\big)}\]
	which equals  $2^m$ if $y = x\oplus z \oplus v$, and $0$ otherwise, and let
	$$\begin{aligned}	
		& \theta_1 = \sum_{\substack{a\in \tei^\star\\x,y,z,v\in\tei}} i^{\Tr_1^m\left( 2y+z+3v+a(x+y+z+v) \right)} \Delta, \! &
		& \theta_2 = \sum_{\substack{a\in \tei^\star\\x,y,z,v\in\tei}} i^{\Tr_1^m\left( 2y+z+v+a(x+y+z+3v) \right)} \Delta, && \\
		& \theta_3 = \sum_{\substack{a\in \tei^\star\\x,y,z,v\in\tei}} i^{\Tr_1^m\left( 2y+3z+3v+a(x+y+3z+v) \right)} \Delta,\! &
		& \theta_4 = \sum_{\substack{a\in \tei^\star\\x,y,z,v\in\tei}} i^{\Tr_1^m\left( 2y+3z+v+a(x+y+3z+3v) \right)} \Delta, && \\
		& \theta_5 = \sum_{\substack{a\in \tei^\star\\x,y,z,v\in\tei}} i^{\Tr_1^m\left( 2y+z+3v+a(x+3y+z+v) \right)} \Delta,\! &
		& \theta_6 = \sum_{\substack{a\in \tei^\star\\x,y,z,v\in\tei}} i^{\Tr_1^m\left( 2y+z+v+a(x+3y+z+3v) \right)} \Delta, && \\
		& \theta_7 = \sum_{\substack{a\in \tei^\star\\x,y,z,v\in\tei}} i^{\Tr_1^m\left( 2y+3z+3v+a(x+3y+3z+v) \right)} \Delta,\! &
		& \theta_8 = \sum_{\substack{a\in \tei^\star\\x,y,z,v\in\tei}} i^{\Tr_1^m\left( 2y+3z+v+a(x+3y+3z+3v) \right)} \Delta.
	\end{aligned}$$	
	Substituting $u = a+2b$ and applying \eqref{eq-S}, a straightforward calculation yields
	$$ \sum_{a\in\tei^\star}\sum_{b\in\tei} S_+(u)S_+(u+2)S_+(u+1)S_+(u+3) = 2\re ( \theta_1 + \theta_2 + \theta_3 + \theta_4 + \theta_5 + \theta_6 + \theta_7 + \theta_8 ). $$
	
	Next, we aim to compute the value of $\theta_1$. By the definition of $\Delta$, we have
	\begin{eqnarray}
		% \nonumber to remove numbering (before each equation)
		\theta_1 &=& 2^m \sum_{x,z,v\in\tei}i^{\Tr_1^m(2x+3z+v)}\sum_{a\in \tei^\star}i^{\Tr_1^m(2a\varepsilon)} \nonumber
		\\ \label{eq-theta1} &=& 2^m(2^m-2)\sum_{\substack{x,z,v\in\tei\\ \varepsilon=0}}i^{\Tr_1^m(2x+3z+v)} - 2^m\sum_{\substack{x,z,v\in\tei\\ \varepsilon\ne0}}i^{\Tr_1^m(2x+3z+v)}(1+i^{\Tr_1^m(2\varepsilon)})
	\end{eqnarray}
	since the inner sum equals $2^m-2$ if $\varepsilon = 0$, and $-1-i^{\Tr_1^m(2\varepsilon)}$ if $\varepsilon \ne 0$, where $\varepsilon=x\oplus z\oplus v\oplus \sqrt{xz}\oplus \sqrt{xv}\oplus \sqrt{zv}$. Observe that $\varepsilon $ runs through $\tei$ when $x,z,v$ range over $\tei$. Moreover, $\varepsilon=0$ if and only if $\varepsilon^2=(x\oplus\xi z\oplus\xi^2v)(x\oplus\xi^2z\oplus\xi v)=0$,  where $\xi$ is an element in some extension field of $\tei$ satisfying $\xi\ne 1$ and $\xi^3=1$. We can then conclude that $\varepsilon=0$ if and only if $x=z=v$. Specifically, if $x\oplus\xi z\oplus\xi^2v=0$, we have $(x\oplus v)\oplus(z\oplus v)\xi=0$ due to the fact that $\xi^2\oplus\xi\oplus 1=0$. This implies $z=v$ and subsequently $x=v$ since $m$ is odd and $\xi\in \mathbb{F}_{2^2}\not\subset \tei$. Similarly, if $x\oplus\xi^2 z\oplus\xi v=0$, we also obtain $x=z=v$. Therefore, by \eqref{eq-theta1}, and using the properties $2\tr_1^m(x)=2\Tr_1^m(x)$, $\Tr_1^m(x^{2^k}) = \Tr_1^m(x)$, and $\Tr_1^m(x+y)=\Tr_1^m(x)+\Tr_1^m(y)$ for a positive $k$ and $x,y\in\tei$, we have
	\begin{eqnarray*}
		% \nonumber to remove numbering (before each equation)
		\theta_1 &=& 2^m(2^m-1)\sum_{\substack{x,z,v\in\tei\\ \varepsilon=0}}i^{\Tr_1^m(2x+3z+v)} - 2^m\sum_{\substack{x,z,v\in\tei\\ \varepsilon\ne0}}i^{\Tr_1^m(z+3v+2\sqrt{xz}+2\sqrt{xv}+2\sqrt{zv})}\\
		&=&  2^m(2^m-1)\sum_{\substack{x\in\tei}}i^{\Tr_1^m(2x)} - 2^m\sum_{\substack{x,z,v\in\tei\\ \varepsilon\ne0}}i^{\Tr_1^m(z+3v+2xz+2xv+2zv)}    \\
		&=& - 2^m\big(\sum_{x,z,v\in\tei}i^{\Tr_1^m(z+3v+2xz+2xv+2zv)}-\sum_{\substack{x,z,v\in\tei\\ \varepsilon=0}}i^{\Tr_1^m(z+3v+2xz+2xv+2zv)}\big)   \\
		&=& - 2^m\big(\sum_{z,v\in\tei}i^{\Tr_1^m(z+3v+2zv)}\sum_{x\in\tei}i^{\Tr_1^m(2(z\oplus v)x)}-\sum_{x\in\tei}i^{\Tr_1^m(2x^2)}\big)   \\
		&=& - 2^{2m}\sum_{z\in\tei}i^{\Tr_1^m(2z^2)}  \\
		&=& 0.
	\end{eqnarray*}
	
	By analogous reasoning, we can derive that $\theta_4=\theta_6 =\theta_7=0$. Next, we proceed to determine the value of $\theta_2$.
	Notice that $\Delta = 2^m$ if $v=x\oplus y\oplus z$, and $0$ otherwise. Similar to the calculation of $\theta_1$, substituting $v=x\oplus y\oplus z$ into $\theta_2$  yields
	\begin{eqnarray}\label{eq-theta2}
		% \nonumber to remove numbering (before each equation)
		\theta_2 &=& 2^m \sum_{x,y,z\in\tei}i^{\Tr_1^m(x+3y+2z+2\epsilon)}\sum_{a\in \tei^\star}i^{\Tr_1^m(2a\epsilon)} \nonumber
		\\  &=& 2^m(2^m-2)\sum_{\substack{x,y,z\in\tei\\ \epsilon=0}}i^{\Tr_1^m(x+3y+2z)} - 2^m\sum_{\substack{x,y,z\in\tei\\ \epsilon\ne0}}i^{\Tr_1^m(x+3y+2z+2\epsilon)}(1+i^{\Tr_1^m(2\epsilon)}) \nonumber
		\\  &=& 2^m(2^m-2)\sum_{\substack{x,y,z\in\tei\\ \epsilon=0}}i^{\Tr_1^m(x+3y+2z)} - 2^m\sum_{\substack{x,y,z\in\tei\\ \epsilon\ne0}}i^{\Tr_1^m(x+3y+2z)}(1+i^{\Tr_1^m(2\epsilon)}),
	\end{eqnarray}
	where the second equality holds because the inner sum equals $2^m-2$ if $\epsilon=0$, and $-1-i^{\Tr_1^m(2\epsilon)}$ if $\epsilon\ne 0$, with $\epsilon=\sqrt{xy}\oplus \sqrt{xz}\oplus \sqrt{yz}$. By leveraging the following two identities
	\begin{eqnarray*}
		% \nonumber to remove numbering (before each equation)
		\sum_{x,y,z\in\tei}i^{\Tr_1^m(x+3y+2z)}
		&=& \sum_{\substack{x,y,z\in\tei\\ \epsilon=0}}i^{\Tr_1^m(x+3y+2z)}+\sum_{\substack{x,y,z\in\tei\\ \epsilon\ne0}}i^{\Tr_1^m(x+3y+2z)} \\
		\sum_{x,y,z\in\tei}i^{\Tr_1^m(x+3y+2z+2\epsilon)}
		&=& \sum_{\substack{x,y,z\in\tei\\ \epsilon=0}}i^{\Tr_1^m(x+3y+2z+2\epsilon)}+\sum_{\substack{x,y,z\in\tei\\ \epsilon\ne0}}i^{\Tr_1^m(x+3y+2z+2\epsilon)}
	\end{eqnarray*}
	and by \eqref{eq-theta2}, $\theta_2$ can be expressed as
	\begin{eqnarray}
		\theta_2 &=& 2^{2m}\sum_{\substack{x,y,z\in\tei\\ \epsilon=0}}i^{\Tr_1^m(x+3y+2z)} - 2^m\sum_{x,y,z\in\tei}i^{\Tr_1^m(x+3y+2z)} - 2^m\sum_{x,y,z\in\tei}i^{\Tr_1^m(x+3y+2z+2\epsilon)} \nonumber
		\\ &=& 2^{2m}\sum_{\substack{x,y,z\in\tei\\ \epsilon=0}}i^{\Tr_1^m(x+3y+2z)} - 2^m\sum_{x,y\in\tei} i^{\Tr_1^m(x+3y)}\sum_{z\in\tei} i^{\Tr_1^m(2z)} - 2^m\sum_{x,y,z\in\tei} i^{\Tr_1^m(x+3y+2z+2\epsilon)} \nonumber
		\\ \label{eq-theta2-1}
		&=& 2^{2m}\sum_{\substack{x,y,z\in\tei\\ \epsilon=0}}i^{\Tr_1^m(x+3y+2z)} -  2^m\sum_{x,y,z\in\tei}i^{\Tr_1^m(x+3y+2z+2\epsilon)}.
	\end{eqnarray}
	Then, by utilizing the properties of the trace function over $\rr$, we have $\Tr_1^m(x+3y+2z+2\epsilon) = \Tr_1^m(x+3y+2z+2xy+2xz+2yz) = \Tr_1^m(x+3y+2xy)+\Tr_1^m(2z(1\oplus x\oplus y))$, which leads to
	\begin{eqnarray*}
		% \nonumber to remove numbering (before each equation)
		\sum_{x,y,z\in\tei}i^{\Tr_1^m(x+3y+2z+2\epsilon)}
		&=& \sum_{x,y\in\tei}i^{\Tr_1^m(x+3y+2xy)}\sum_{z\in\tei}i^{\Tr_1^m(2z(1\oplus x\oplus y))} \\
		&=& 2^m\sum_{y\in\tei}i^{\Tr_1^m\left((1\oplus y)+3y+2y(1\oplus y)\right)} \\
		&=& 2^m\sum_{y\in\tei}i^{\Tr_1^m(1+2\sqrt{y}+2y+2y^2)}\\
		&=& 2^m\sum_{y\in\tei}i^{\Tr_1^m(1+2y)}\\
		&=& 0,
	\end{eqnarray*}
	where the second-to-last equality follows from $2\Tr_1^m(\sqrt{y})=2\Tr_1^m(y)=2\Tr_1^m(y^2)$. This together with \eqref{eq-theta2-1} gives
	\begin{equation}\label{eq-theta2-2}
		\theta_2 = 2^{2m}\sum_{\substack{x,y,z\in\tei\\ \epsilon=0}}i^{\Tr_1^m(x+3y+2z)}.
	\end{equation}

	Recall that $\epsilon=\sqrt{xy}\oplus \sqrt{xz}\oplus \sqrt{yz}$ and it equals zero if and only if $xy\oplus xz\oplus yz=0$. This condition is equivalent to $x=y=0$ or $z=xy/(x\oplus y)$ if $x\ne y$. Thus, we obtain
	\begin{eqnarray}\label{eq-theta2-3}
		% \nonumber to remove numbering (before each equation)
		\sum_{\substack{x,y,z\in\tei\\ \epsilon=0}}i^{\Tr_1^m(x+3y+2z)} &=& \sum_{\substack{x,y,z\in\tei\\ x=y=0}}i^{\Tr_1^m(x+3y+2z)}+\sum_{\substack{x,y,z\in\tei\\ x\ne y,z=xy/(x\oplus y)}}i^{\Tr_1^m(x+3y+2z)}  \nonumber\\
		&=&\sum_{z\in\tei}i^{\Tr_1^m(2z)}+ \sum_{x,y\in\tei, x\ne y}i^{\Tr_1^m\left(x+3y+2xy/(x\oplus y)\right)} \nonumber\\
		&=&\sum_{x,y\in\tei, x\ne y}i^{\Tr_1^m\left(x+3y+2xy/(x\oplus y)\right)}\nonumber\\
		&=& \sum_{x\in\tei, \mu\in\tei^*}i^{\Tr_1^m(2x+3\mu+2\sqrt{x\mu}+\frac{2x^2}{\mu})} \nonumber\\
		&=& \sum_{x\in\tei, \mu\in\tei^*}i^{\Tr_1^m(2x+3\mu+2x\mu+\frac{2x}{\sqrt{\mu}})}    \nonumber\\
		&=& \sum_{\mu\in\tei^*}i^{\Tr_1^m(3\mu)}\sum_{x\in\tei}i^{\Tr_1^m\left(2(1\oplus \mu\oplus\frac{1}{\sqrt{\mu}})x\right)},
	\end{eqnarray}
	where in the fourth equality we made the substitution $x\oplus y=\mu\in\tei^*$, and the second-to-last equality holds due to the properties of the trace function $\Tr_1^m(x^{2^k}) = \Tr_1^m(x)$ and $\Tr_1^m(x+y)=\Tr_1^m(x)+\Tr_1^m(y)$ for a positive $k$ and $x,y\in\tei$. Observe that $1\oplus \mu\oplus\frac{1}{\sqrt{\mu}}=0$ if and only if $\mu^3\oplus\mu\oplus 1=0$, which is irreducible over $\tei$. Consequently, it has no root in $\tei$ if $3\nmid m$, and it has $3$ roots $\mu, \mu^2, \mu^4$ in $\tei$ if $3\mid m$. Then, by \eqref{eq-theta2-3}, we obtain
	\begin{eqnarray}\label{eq-theta2-4}
		% \nonumber to remove numbering (before each equation)
		\sum_{\substack{x,y,z\in\tei\\ \epsilon=0}}i^{\Tr_1^m(x+3y+2z)}=
		\left\{ \begin{array}{cl}
			0, &  {\rm if\;}  3\nmid m, \\
			3\cdot 2^mi^{\Tr_1^m(3\mu)}, &  {\rm if\;}  3\mid m
		\end{array}\right.
	\end{eqnarray}
	since $\Tr_1^m(\mu)=\Tr_1^m(\mu^2)=\Tr_1^m(\mu^4)$, where $\mu^3\oplus\mu\oplus 1=0$.

 	Observe that $x^3+x+1$ is irreducible over $\mathbb{F}_2$ and the coefficient of $x^2$ is zero. This implies that $\tr_1^m(\mu)=0$. Using $\mu^3\oplus\mu\oplus 1=0$, we have $\mu^7=\mu\mu^6=\mu(\mu^2\oplus1)=1$. Consequently,  $\mu^{2^j+1}=\mu^2$ if $j=0\,({\rm mod}\, 3)$,  $\mu^{2^j+1}=\mu^3=\mu\oplus1$ if $j=1\,({\rm mod}\, 3)$, and  $\mu^{2^j+1}=\mu^5=\mu^2\oplus\mu\oplus1$ if $j=2\,({\rm mod}\, 3)$. Since $m$ is odd and $3| m$, assume that $m=3(2k+1)$, where $k$ is a non-negative integer. Then, by Lemma \ref{lem-Tr-tr}, we have
	\begin{eqnarray*}
		\Tr_1^m(\mu) &=& \tr_1^m(\mu)+2\oplus_{j=1}^{(m-1)/2}\tr_1^m(\mu^{2^j+1})
		\\ &=& 2\oplus_{j=1}^{3k+1}\tr_1^m(\mu^{2^j+1})
		\\ &=& 2\big(k\cdot\tr_1^m(\mu^2)\oplus(k+1)\cdot\tr_1^m(\mu^3)\oplus k\cdot\tr_1^m(\mu^5)\big)
      	\\ &=& 2((k+1)\oplus k)
       	\\ &=& 2
	\end{eqnarray*}
	since $\tr_1^m(\mu^2)=\tr_1^m(\mu)=0$, $\tr_1^m(\mu^3)=\tr_1^m(\mu\oplus1)=\tr_1^m(1)=1$, and $\tr_1^m(\mu^5)=\tr_1^m(\mu^2\oplus\mu\oplus1)=\tr_1^m(1)=1$ due to $m$ being odd. Therefore, by  \eqref{eq-theta2-2}, \eqref{eq-theta2-3} and \eqref{eq-theta2-4}, we obtain
	\begin{eqnarray*}
		% \nonumber to remove numbering (before each equation)
		\theta_2=
		\left\{ \begin{array}{cl}
			0, &  {\rm if\;}  3\nmid m, \\
			-3\cdot 2^{3m}, &  {\rm if\;}  3\mid m.
		\end{array}\right.
	\end{eqnarray*}
	This, combined with the definition of $\sigma$ in \eqref{eq-tau-sig}, yields $\theta_2 = 2^{3m}\sigma$.

	By similar reasoning, we can deduce that  $\theta_3 =\theta_5 = \theta_8 =\theta_2 = 2^{3m}\sigma$. Hence, we have	
	\begin{equation*}
		\sum_{a\in\tei^\star}\sum_{b\in\tei}S_+(u)S_+(u+2)S_+(u+1)S_+(u+3) = 8\re(\theta_2) = 8\theta_2 = 2^{3m+3}\sigma.
	\end{equation*}
	This completes the proof.

\end{document}